\pgfplotsset{compat = 1.15, cycle list/Set1-8} 
\newenvironment{appendixLemma}[1]{%
  \appendixLemmaInner
}{\endappendixLemmaInner}
\newenvironment{appendixTheorem}[1]{%
  \appendixTheoremInner
}{\endappendixTheoremInner}
\newcommand{\customlabel}[2]{%
\protected@write \@auxout {}{\string \newlabel {#1}{{#2}{}}}}
\newif\ifArxiv
\newif\ifRelease
\newcommand{\change}[1]{{#1}\xspace}
\newcommand{\changeNew}[1]{{#1}\xspace}
\newcommand{\change}[1]{{\color{blue}#1}\xspace}
\newcommand{\changeNew}[1]{{\color{orange}#1}\xspace}
\newcommand{\mo}[1]{#1}
\newcommand{\bl}[1]{#1}
\newcommand{\co}[1]{#1}
\newcommand{\NL}{\textsc{NL}\xspace}
\newcommand{\nop}[1]{}
\newcommand{\A}{\mathcal{A}}
\newcommand{\E}{\mathcal{E}}
\newcommand{\I}{\mathcal{I}}
\renewcommand{\L}{\mathcal{L}}
\newcommand{\R}{\mathcal{R}}
\newcommand{\T}{\mathcal{T}}
\newcommand{\eval}[2]{\llbracket #1 \rrbracket^{#2}}
\newcommand{\node}[1]{\langle #1 \rangle}
\newcommand{\transclosure}[2]{#1 \sqsubseteq^*_\T #2}
\newcommand{\epsilonpath}[2]{\varepsilon^*\text{-path from }#1\text{ to }#2}
\newcommand{\ISA}{\sqsubseteq}
\newcommand{\AND}{\sqcap}
\newcommand{\some}{\exists}
\newcommand{\rolenames}{\mathbf{R}}
\newcommand{\conceptnames}{\mathbf{C}}
\newcommand{\indivs}{\mathbf{N}}
\newcommand{\allroles}{\overline{\rolenames}}
\newcommand{\translate}[1]{\mathtt{rpe}(#1,G_\T)}
\newcommand{\union}{+}
\newcommand{\restrictedQuery}{NCQ}
\newcommand{\cdg}{CDG\xspace}
\newcommand{\ontoLang}{\mathcal{ELH}i^\mathit{ql}}
\newcommand{\invin}{\mathbin{\widehat{\in}}}
\begin{document}
\title{Towards Practicable Algorithms for Rewriting Graph Queries beyond DL-Lite}
\titlerunning{Towards Practicable Rewriting of Graph Queries beyond DL-Lite}
%

\author{Bianca Löhnert\inst{1}
\and
Nikolaus Augsten\inst{1}
\and
Cem Okulmus\inst{2}
\and
Magdalena Ortiz\inst{3}
}

\authorrunning{B. Löhnert et al.}
%
\institute{$^1$University of Salzburg, Austria  $^2$Paderborn  University, Germany  $^3$TU Wien, Austria}
%
\maketitle              
\begin{abstract}
Despite the advantages that the virtual knowledge graph paradigm has brought to many application domains, state-of-the-art systems still do not support popular graph database management systems like Neo4j.
Their query rewriting algorithms focus on languages like conjunctive queries and their unions, which 
were developed for relational data and
are poorly suited for graph data. Moreover, they also limit the expressiveness of the ontology languages that admit rewritings, restricting them to 
\changeNew{the  
DL-Lite family that enjoys the so-called \emph{FO-rewritability}.}
In this paper, we propose a technique for rewriting a family of \emph{navigational queries} for a suitably tailored fragment of $\mathcal{ELHI}$.  
\mo{Leveraging navigational features in the target query language, we can include some widely-used axiom shapes not supported by DL-Lite.}
We implemented a proof-of-concept prototype that rewrites into Cypher queries, and tested it on a real-world cognitive neuroscience use case with promising results.

\keywords{ Ontology-Based Data Access  
\and Property Graphs 
\and Navigational Queries}
\end{abstract}
\section{Introduction}
The virtual knowledge graph (VKG) paradigm~\cite{Xiao2023}, also known as ontology-based data access, emerged from real-life scenarios \cite{Poggi2008} and has been deployed in  real-world systems such as Ontopic, Stardog and Mastro \cite{mastro,ontopic,stardog}.
In it, the data is described by a high-level conceptual layer that enables the user to formulate queries using  the familiar vocabulary of a domain ontology. 
In addition to facilitating query formulation, the knowledge in the ontology can be used to infer implicit information when querying possibly incomplete data. The key problem to address here is \emph{ontology-mediated query answering} (OMQA), where a query is to be evaluated not directly over a plain dataset, but over the consequences of the given dataset together with the knowledge in the ontology.
The VKG paradigm has seen widespread adoption and brought by significant savings in the data access and integration costs in a range of applications \cite{DBLP:journals/dint/XiaoDCC19}, but so far, it has been limited to relational database management systems (RDBMS). 
The predominant technique for ontology-mediated querying is \emph{query rewriting}, where an input query is transformed to incorporate the ontological knowledge so that the rewritten query can be evaluated over any input dataset using standard technologies (i.e., SQL queries and standard RDBMS), without further ontological reasoning, and give complete answers that take into account implicit facts \cite{Calvanese2007}. The so-called \emph{union of conjunctive queries}, which capture popular SQL fragments \cite{DBLP:books/aw/AbiteboulHV95} are used as source and target languages for query rewriting. However, this technique can only be deployed for ontology languages whose data complexity is not higher than that of \change{evaluating the FO-fragment of SQL.} 
The DL-Lite family of description logics (DLs) was introduced with this goal in mind, and has become the ontology language of choice for OMQA~\cite{Calvanese2007}. \\ \indent
While traditional RDBMS are not going away, recent years have seen huge adoption of new paradigms for storing and querying data, and bringing the VKG paradigm to them would open up opportunities in many application domains. 
Particularly popular are \emph{graph databases} 
that adopt the \emph{property graph} (PG) data model~\cite{Angles2017}. 
PGs comprise nodes and edges between them; both nodes and edges can be labelled, and have assigned key-value pairs. 
In addition to a new data model, graph databases 
also provide unique querying abilities 
not found in the relational setting. 
\bl{Very recently the International Organization for Standardization (ISO) released a new standard for graph query languages. This standard,} called GQL \cite{Francis2023}, captures and extends \emph{conjunctive two-way regular path queries} (C2RPQ), a query language for graphs widely studied in the OMQA literature. \\ \indent
Query languages for graph data like C2RPQ and GQL are characterized by the \emph{navigational features} that allow queries to traverse paths of arbitrary length that comply with some \emph{regular path expression}. This fundamentally recursive feature results in a higher data complexity than that of SQL, and graph query languages are typically \NL complete in data complexity (under the so-called \emph{walk semantics}). This higher complexity, however, also means that we are not bound to DL-Lite and can potentially consider richer \change{ontology} languages with \NL data complexity. To our knowledge, the only such DLs so far are the linear fragments of $\mathcal{ELH}$ and $\mathcal{ELHI}$ identified by Dimartino et al.~\cite{Dimartino2019}, which \bl{do not allow any form of conjunction.} 
That work also introduced a query rewriting procedure for instance queries and CQs based on finite state automata but, unfortunately, it seems that it was never implemented.
When it comes to navigational ontology mediated queries, there is a rich body of theoretical work that considers C2RPQs as the input query language~\cite{DBLP:conf/ijcai/BienvenuOS13,DBLP:conf/kr/BienvenuCOS14,DBLP:journals/jair/BienvenuOS15}, but the goal of those works was to understand the computational complexity and they do not provide implementable query rewriting algorithms. 
The only practical approach to rewriting navigational queries is the recent algorithm for DL-Lite \cite{DBLP:conf/dlog/DragovicO023}. 
\changeNew{We extend this algorithm to a fragment of $\mathcal{ELHI}$ with NL data complexity. 
Unsurprisingly, lifting the techniques from DL-Lite is far from trivial. 
Even without conjunctions, in  $\E\L$ we may need a path of unbounded length to witness the propagation of a single atom, instead of single data point as in DL-Lite; with conjunction we may need a tree of such paths. 
Our fragment $\ontoLang$ was tailored to bound the branching points of such trees, while still accommodating the conjunctions and inverse roles used in our use case ontology from the domain of neuroscience.}
Our main contributions are as follows.
\newline 
\noindent
\textbullet~ We explore the limits of rewriting navigational queries and show that, even for lightweight ontologies, rewritings may not exist if C2RPQs are considered as both source and target query language. \\
\noindent
\textbullet~ Inspired by this insight, we identify a subset of C2RPQs, termed  Navigational Conjunctive Queries (NCQs), which \mo{offers key features of graph query languages---like reachability using the Kleene star---without ruling out the possibility of rewriting into unions of C2RPQs.}\\
\noindent
\textbullet~ 
\mo{We also push the boundaries of expressivity in the ontology language. Leveraging the navigational features of our target graph query language, we can accomodate typical $\mathcal{EL}$ axioms like $\exists{r}.A \ISA B$. Inverse roles and concept conjunction are two popular constructors heavily used in real-life ontologies. Hence, even though they make reasoning hard for PTime and thus preclude rewritability, we allow them in a restricted form. We call the resulting logic $\ontoLang$ (\emph{quasi-linear $\mathcal{ELH}$ with restricted inverses}), and we provide an algorithm for standard reasoning using a graph structure.} \\
\noindent
\textbullet~ We propose a rewriting for OMQs that pair NCQs with $\ontoLang$ ontologies. First we \change{introduce} a technique for standard reasoning to rewrite atomic queries into C2RPQs, and then we combine this with the well-known Clipper rewriting~\cite{Eiter2012} \change{that we extended for NCQs}. 
\\
\noindent
\textbullet~ We present a proof-of-concept prototype of our technique and use it to evaluate queries over real-world data from 
the domain of cognitive neuroscience.

\medskip\noindent
The remainder of the paper is structured as follows.
In \Cref{sec:preliminaries} we present the necessary terminology, such as the ontology we focus on, PGs in the context of OMQA and C2RPQs.  In  \Cref{sec:tboxreasoning}, we present our supported ontology language and show how reasoning about concept subsumption in the ontology can be handled via a bespoke data structure. In \Cref{sec:rewritingAQs} we first focus on the case of rewriting atomic queries into C2RPQs that capture all their consequences from the ontology.
In \Cref{sec:rewriting} we show the limitations of rewriting C2RPQs when used as the input and intended output language for OMQA and present a restricted subset of C2RPQs that retains rewritability.  
In \Cref{sec:impl_and_exp} we report on our proof-of-concept prototype and evaluate queries over data from a real-world use case. We conclude and point to future work in \Cref{sec:conclusion}. 
\ifArxiv
All proofs are found in the appendix to improve readability.
\else
\co{We omit detailed proofs in this paper to save space and improve readability. All proofs can be found in the full version of this paper~\cite{DBLP:journals/corr/abs-2405-18181}.}
\fi

\section{Preliminaries}
 \label{sec:preliminaries}
	
\textbf{Ontology language.}
\mo{We recall the definition of $\mathcal{ELHI}$, a well-known description logic subsuming the popular lightweight languages DL-Lite$_\R$ and $\mathcal{ELH}$;
in the next sections we restrict our attention to a fragment of it. For space reasons we omit disjointness axioms, but they can be easily incorporated.
We also recall the usual \emph{normal form} for $\mathcal{ELHI}$ TBoxes; the proof that every TBox can be normalized (in linear time) while preserving the semantics is standard.}


\changeNew{We assume disjoint, countably infinite sets $\conceptnames$, $\rolenames$,  $\indivs$ and \change{$\mathbf{K}$}  of \emph{concept names},  \emph{role names}, \emph{individuals} and \change{\emph{key names}}, respectively, as well as a \emph{concrete domain} $(\mathbf{D}, \mathcal{P}^D)$, with $\mathbf{D}$ a set of \emph{values} and $\mathcal{P}^D$ a set of binary predicates over $\mathbf{D}$.
For example, $(\mathbf{D}, \mathcal{P}^D)$ could contain the integers with the usual $=$, $\leq$, $\geq$ predicates.} 

	\begin{definition}[$\mathcal{ELHI}$] \label{def:descriptionLogic}
     The set of \emph{roles} is $\allroles=\rolenames \cup \{r^-\mid r\in \rolenames \}$, 
		and \emph{concepts} $C$ follow the syntax 
        $C := A\mid\top\mid \exists r.C\mid C \sqcap C$,         
		%
        where $A\in \conceptnames$ and $r\in \allroles$.
		A \emph{concept inclusion} (CI) has the form $C \ISA D$, where $C,D$ are concepts, and a \emph{role inclusion} (RI) the form $r\ISA s$, where $r,s$ are roles. 
        A TBox is a finite set of CIs and RIs, and it is said to be in \emph{normal form} if all inclusions take these forms:  
        { \begin{align*}	
		  &  \ A_1\sqcap\dots\sqcap A_n \ISA B
            &  & \ \exists r.A \ISA B 
            &  & \ A \ISA \exists r.B &
            &  \  r \ISA s   
		\end{align*}    
        }
         We use $\transclosure{}{}$ to denote the reflexive transitive closure of $\{(r,s)\mid r\ISA s\in \T\}$ and call $r$ a \emph{subrole of $s$ (in $\T$)} if $\transclosure{r}{s}$.
	\end{definition}

\changeNew{
The semantics are given via \emph{interpretations} of the form $\I=(\varDelta^\I,\cdot^\I)$, with $\varDelta^\I$ a non-empty set called the \emph{abstract domain}. $\cdot ^\I$ is the \emph{interpretation function}, which assigns to every $A\in \conceptnames$ a set $A^\I\subseteq \varDelta^\I$, to every $r\in \rolenames$ a relation $r^\I\subseteq \varDelta^\I\times\varDelta^\I$ and to every $k \in \mathbf{K}$ a relation $k^\I \subseteq (\varDelta^\I \cup ( \varDelta^\I \times \varDelta^\I  )  ) \times \mathbf{D}  $.
}
It is extended to concepts and CIs in the usual way, see \Cref{tab:elhi} in \Cref{app:Prelmin}. \mo{Modelhood and entailment are also standard.} 

\medskip\noindent
\textbf{Data model.}
    In this paper the data (or ABox) is given as 
    finite \emph{property graphs}. 
    	
	\begin{definition}
		A \emph{property graph} (PG) $\mathcal{A}$ has the form $(N,E,\mathsf{label},\mathsf{prop})$, where: 
		\begin{compactitem}    
			\item $N$ is a non-empty set of \emph{nodes}; 
			\item $E$ is the \change{set} of edges; it assigns to each role $r\in \rolenames$ a relation on $N \times N$ which we write in the form $r(n,n')$ and call it the set of \emph{$r$-labeled edges}; 
			\item $\mathsf{label}$ is a total function $N \rightarrow 2^{\conceptnames}$; 
			\item $\mathsf{prop}$ is a partial function $(N\cup E) \times \mathbf{K} \rightarrow \mathbf{D}$ mapping pairs $(u,k)$ with $u \in (N\cup E)$ and $k\in \mathbf{K}$ to a value in
            $\mathbf{D}$.
		\end{compactitem}
    If 
    $N \subseteq \indivs$ and it is finite, 
    we call it an \emph{ABox}. A pair of a TBox $\T$ and an ABox $\A$ is called a \emph{knowledge base}. \change{We say that $\A' = (N', E',\mathsf{label}',\mathsf{prop}')$
    is a \emph{subgraph} of $\A$
    if $N' \subseteq N$, $E' \subseteq E$,  $ \mathsf{label}'(n) = \mathsf{label}(n) $ for all $n \in N'$, and $\mathsf{prop}'(u,k) = \mathsf{prop}(u,k) $ for all $u \in N' \cup E'$, $k \in K'$.}

	\end{definition}

 \changeNew{Note that we allow 
key-value pairs only in the ABox, and that our definition of property graph allows only a single edge between each pair of nodes.}

Each interpretation can be seen as a property graph, and vice-versa. 

\begin{definition}
 \label{def:PropInterpret}
		For a property graph $\A=(N,E,\mathsf{label}, \mathsf{prop})$, define 
  $\I_{\A}=(N, \cdot ^\I)$ as follows:   
			$C^\I = \{n\in N	\mid C \in \mathsf{label}(n)\}$,
			$r^\I = \{(n,n') 	\mid r(n,n')\in E\}$ and
           \change{ $k^\I = \{ (u,d) \mid  d = \mathsf{prop}(u,k) \}$.}
		Conversely, 
  $\I=(\varDelta^\I,\cdot ^\I)$ induces a (possibly infinite) property graph $\mathit{PG}(\I) =(\varDelta^\I,E, \mathsf{label},\mathsf{prop})$, where 
        $E = \{r(n,n')\mid (n,n')\in r^\I \}$, 
        $\mathsf{label}(n) = \{C\in \conceptnames \mid n \in C^\I\}$ and 
        \change{$\mathsf{prop}(u,k) = \{d  \in \mathbf{D} \mid  (u,d) \in  k^\I  \}$}.
    $\I$ is a \emph{model} of an ABox $\A$, if $\A$ is a subgraph of $PG(\I)$. An ABox $\A$ is \emph{consistent} with a TBox $\T$ iff there is a model of $\A$ and $\T$.
\end{definition}


\medskip\noindent
	\textbf{Query Language.}
	We study \emph{conjunctive two-way regular path queries} (C2RPQs), the navigational query language for graphs that has received most attention in OMQA. We enhance C2RPQs by \emph{data tests} as in \cite{DBLP:conf/dlog/DragovicO023} to query for property values, assuming that the predicates in  \changeNew{$\mathcal{P}^D$ can be realized} in GQL and Cypher. 
    
    \begin{definition} \label{def:c2rpq}
    Let $T$ be a \emph{data test} defined as $T := k \odot v \mid T \land T \mid T\lor T \mid \lnot T$, where $k \in \mathbf{K} \text{, } v\in \mathbf{D} \text{ and } \changeNew{\odot \in \mathcal{P}^D}$ \text{a binary predicate}.
    A \emph{regular path expression} (RPE) $\pi$ is defined as follows, with $\pi^+ = \pi\pi^*$, $A\in \conceptnames$, $r\in \rolenames$ and $T$ a data test.
    \begin{align*}
         & \alpha := r \mid r^- \mid \node{A} \qquad \pi := \alpha  \mid \co{\{T\}} \mid \pi\pi \mid \pi\union\pi \mid \pi^* \mid \pi^+
    \end{align*}
    We assume a countably infinite set $\mathbf{V}$ of \emph{variables}, disjoint from $\conceptnames$, $\rolenames$, $\indivs$, $\mathbf{K}$ and $\mathbf{D}$ and define \emph{atoms} of the form 
$\pi(x,y)$ with $\pi$ an RPE and $x,y \in \mathbf{V}$. \change{Note that $\pi^*$ (resp. $\pi^+$) refers to the Kleene star (resp. Kleene plus), as known from formal language theory~\cite{DBLP:books/daglib/0086373}.}
A \emph{conjunctive two-way regular path query (with data tests)} (C2RPQ(d)) is a pair $(\varphi,\vec{x})$ where $\varphi$ is a conjunction of atoms $\pi_1(x_1,y_1) \land \cdots \land \pi_n(x_n,y_n)$ and the \emph{answer variables} $\vec{x} $ are a tuple of variables occurring in $\varphi$. 
We write $\mathit{vars}(q) \subseteq \mathbf{V} $ for the set of all variables occurring in atoms of a query $q = (\varphi,\vec{x})$, where $\vec{x} \subseteq \mathit{vars}(q)$.
A variable $x$ is \emph{unbound} in $q$ if $x \not \in \vec{x}$ and it occurs in exactly one atom of $q$. 
We may write $q(\vec{x}) := \varphi$ for a C2RPQ with answer variables $\vec{x}$.

	\end{definition}
	
\normalmarginpar

Atomic concept tests are always binary atoms $\node{A}(x,x)$, but we often shorten this to $A(x)$. 
\changeNew{
We may refer to queries of form $q(x) := A(x)$ and $q(x,y) := \pi(x,y)$ as \emph{atomic} (a.k.a.\,instance) queries. Note that for RPEs $\pi$ that do not use roles (e.g., combinations of concept tests), the $y$ in an atom $\pi(x,y)$ is irrelevant.} 

To illustrate our query language, we make use of our real-world use case from the domain of cognitive neuroscience, which is also the scope of our experiments in \Cref{sec:impl_and_exp}.
The query $q(x)$  retrieves all
datasets from an MRI with a certain specification and data on ambidextrous participants:\begin{align*}
q(x):=&\mathsf{\langle Dataset\rangle(x) }\land \mathsf{\{Manufacturer="SIEMENS"}\land \mathsf{MagnetFieldStrength} \ge 3\}(x)\\ 
&\land \mathsf{has}^*(x,y) \land \mathsf{\langle Participant\rangle}(y)\land \mathsf{\{Handedness="ambidextrous"\}}(y)
\end{align*}
\changeNew{We use here a concrete domain that contains strings and integers, with separate predicates, and 
data tests on the keys \textsf{Manufacturer}, \textsf{Handedness},
and \textsf{MagneticFieldStrength}. The Kleene star in the second atom is the distinctive navigational feature of graph query languages, absent from any FO-rewritable query language; it lets us explore paths of unbounded length across the property graph.}

\medskip 

Following the 
literature, we use the \emph{homomorphism} or \emph{walk} semantics~\cite{Angles2017} \bl{for evaluating regular path queries. 
We define the evaluation of a C2RPQ $q(\vec{x})$ in terms of a function $\eval{\cdot }{\A}$, defined Table \ref{tab:eval} in \Cref{app:Prelmin}}.
If a TBox is given,  we adopt the \emph{certain answer} semantics as usual in OMQA.  
\begin{definition}\label{def:certainAnswer}
Let $\A$ be a PG with nodes $N$, and let  $q(\vec{x})$ be a C2RPQ. 
A tuple $\vec{a}$ of nodes in $N$ is an \emph{answer to $q(\vec{x})$ over $\A$} if there exists a mapping $\mu\in\eval{\varphi}{\A}$ s.t. $\mu(\vec{x})=\vec{a}$. 
For ABox $\A$ and TBox $\T$, we call $\vec{a}$ a certain answer to $q(\vec{x})$ over $(\T,\A)$ if $\vec{a}$ is an answer to $q(\vec{x})$ in $\mathit{PG}(\I)$ for every model $\I$ of $\A$ and $\T$.  
\end{definition}

\section{Reasoning in $\ontoLang$}\label{sec:tboxreasoning}
We introduce $\ontoLang$, a fragment of $\mathcal{ELHI}$
that restricts both conjunction and the use of inverses. It is well known that the interaction of these two constructors with 'non-local' concepts that may propagate across the ABox causes P-hardness in data complexity~\cite{Calvanese2013}.

\begin{definition}[$\ontoLang$]
\label{def:normalform} 
Let $\T$ be an $\mathcal{ELHI}$ TBox in normal form.  A concept name $B$ is \emph{non-local} in $\T$ if there is an axiom of the form $\exists r.B\ISA \change{C}$, or $B\ISA A$ with $A$ being non-local; otherwise, $B$ is \emph{local}. 
$\mathcal{T}$ is an $\ontoLang$ TBox if: 
    \begin{compactitem} 
        \item every existential restriction $\some{r}.{C}$ has either $r \in \rolenames$ or $C = \top$,
        \item only role names are allowed in RIs, and
        \item every concept name $B$ in an axiom $A_1\sqcap\dots\sqcap A_n \ISA B$ is local in $\T$. 
    \end{compactitem}   
     For convenience, we recall the axiom shapes of $\ontoLang$:
        \customlabel{NF1}{NF1}%
        \customlabel{NF3}{NF2}%
        \customlabel{NF4}{NF3}%
        \customlabel{NF5}{NF4}%
        \customlabel{NF7}{NF5}%
        \customlabel{NF8}{NF6}%
	\begin{align*}	
		& (NF1) \ A_1\sqcap\dots\sqcap A_n \ISA B
            & (NF2) &\ \exists r.A \ISA B 
            & (NF3) & \ A \ISA \exists r.B \\
            & (NF4) \ \  r \ISA s 			
            & (NF5) & \ \exists r^-.\top \ISA B 
            & (NF6) & \ A \ISA \exists r^-.\top 
	\end{align*}  
\end{definition}
\begin{example}
    Consider the $\mathcal{ELHI}$ TBox $\T=\{\some r.B\ISA C, A\ISA B, {A_1\AND A_2\ISA A}$, $\some r.A_3\ISA A_1\}$. By \Cref{def:normalform} $B,A,A_3$ are non-local and since $A$ occurs on the right-hand side of an axiom in form of (\ref{NF1}) $\T$ does not fall into the fragment of $\ontoLang$. However, $\T$ without the axiom $\some r.B\ISA C$ is indeed an $\ontoLang$ TBox. 
\end{example}
$\ontoLang$ is related, but different from  the \emph{harmless fragment of linear $\mathcal{ELHI}$} studied in \cite{Dimartino2020}.
\changeNew{The latter allows no conjunction, and imposes that if $A\ISA \exists r_2.A_2$ and $\exists r_1.A_1\ISA A_2$ occur together then  neither $\transclosure{r_1}{r^-_2}$ nor $\transclosure{r_2}{r^-_1}$ holds.}  

\noindent 
To reason 
in $\ontoLang$ we introduce a structure called \emph{concept dependency graph}. 

\begin{definition}\label{def:CDG_nodes_and_edges}
    We consider graphs $G$ that contain three types of edges: 
    \emph{$\varepsilon$-pro\-pa\-ga\-ting edges} $\varepsilon(A,B)$,  
    \emph{$r$-propagating edges} $r(A,B)$, and \emph{$r$-existential edges}, $\exists{r}(A,B)$, where $r \in \allroles$.        
    A \emph{propagating path} $p$ from node $A_0$ (to node $A_n$)
    is a (possibly empty) sequence $A_0\rho_1A_1\dots \rho_iA_i\dots A_n$ where, for each $1 \leq i < n$, 
    $A_i$ is a node in $G$ and there is a ($\varepsilon$- or $r$-)propagating edge $\rho_i(A_i,A_{i+1})$
    ; note that $\rho_i \in \allroles \cup \{\varepsilon\}$.
    The \emph{role label} $r_1\dots r_k$ of $p$ is defined as the subsequence of $\rho_1 \dots \rho_n$ that contains only roles, that is, omitting all $\rho_i$ such that $\rho_i = \varepsilon$.  
    Then, we say that $A_0\rho_1A_1\dots \rho_n A_n$ \emph{propagates} the concept $\exists s_1\dots\exists s_k.D$, 
    if $\transclosure{r_j}{s_j}$ for each $1 \leq j \leq k$,  \mo{$A_n$ is a (possibly empty) conjunction of concept names}, 
    and $D$ is a (possibly complex) $\ontoLang$ concept $D$ such that $\models D \ISA A_n$.
\end{definition}

\begin{definition}\label{def:conceptdependencygraph}
The \textit{concept dependency graph} (CDG) of an $\ontoLang$ TBox $\T$, denoted $G_\T$, is the directed multigraph defined as follows.
\begin{itemize}[leftmargin=1em]
     \item The set of nodes $N_\T$ of $G_\T$ contains the top concept $\top$, all concept names occurring in $\T$, all concepts of the form $A_1\sqcap\dots\sqcap A_n$ that occur on the \mo{left-hand} side of axiom of shape (\ref{NF1}), and a node $W_{\exists r.B}$ for each concept 
     $\exists r.B$ 
     that occurs on the right-hand side of an axiom of shape (\ref{NF4}) or (\ref{NF8}) in $\T$.  

    \item The set of edges $E_\T$ of $G_\T$ is the smallest set that satisfies the following: 
    \begin{enumerate}[label=(\alph*)]
        \item \label{def:conceptdependencygraph:item:top} $\varepsilon(\top,A) \in E_\T$ for each $A \in N_\T$
        \item \label{def:conceptdependencygraph:item:NF1} $\varepsilon(A,B_1\sqcap\dots\sqcap B_n)\in E_\T$ for each $B_1\sqcap\dots\sqcap B_n\ISA A$ of shape (\ref{NF1}) in $\mathcal{T}$
        \item \label{def:conceptdependencygraph:item:closeconjunction} $\varepsilon(B_1\sqcap\dots\sqcap B_n, A)\in E_\T$ if in $G_\T$ there is a node $B_1\sqcap\cdots B_i\cdots\sqcap B_n$ and
        for each $B_i$, there is
        an $\varepsilon^*$-path from $B_i$ to $A$
        \item \label{def:conceptdependencygraph:item:NF3} $r(A,B) \in E_\T$ for each $\exists r.B\ISA A$  of shape (\ref{NF3}) or (\ref{NF7}) in $\mathcal{T}$
        \item \label{def:conceptdependencygraph:item:roleAndInverse} $r(A,\top)\in E_\T$ if in $G_\T$ there is a propagating path with role label $r$ from $A$ to $B$ and $\exists s^-.\top\ISA B$ of shape (\ref{NF7}) in $\T$ with $\transclosure{r}{s}$ 
        \item \label{def:conceptdependencygraph:item:NF4} $\exists r(A,{W_{\exists r.B}}) \in E_\T$ and $\varepsilon(B,W_{\exists r.B}) \in E_\T$ for each $A\ISA\exists r.B$ of shape (\ref{NF4}) or (\ref{NF8}) in $\mathcal{T}$
        \item \label{def:conceptdependencygraph:item:NF7_inv} $\varepsilon(A,W) \in E_\T$ if there is some $\exists r^-. \top \ISA A$  of shape (\ref{NF7}) in $\mathcal{T}$, an edge $\exists s(C,W) \in E_\T$, with $\transclosure{s}{r}$ and no $\epsilonpath{A}{W}$ already
        \item \label{def:conceptdependencygraph:item:NF7} $\varepsilon(A,W) \in E_\T$ if in $G_\T$ there is $\exists s^-(C,W)$ with $\transclosure{s}{r}$ and a propagating path with role label $r$ from $A$ to $C$ and no $\epsilonpath{A}{W}$ already
        \item \label{def:conceptdependencygraph:item:close} $\varepsilon(A,C)\in E_\T$ if in $G_\T$ there is $\exists s(C,W)$ for some role $r$ with $\transclosure{s}{r}$ 
        and some propagating path $p$ with role label $r$ from $A$ to $W$, and there is no $\varepsilon^*$-path from $A$ to $C$ already
    \end{enumerate} 
\end{itemize}
\end{definition}

\tikzset{
->, 
>=stealth', 
node distance=2cm, 
every state/.style={thick}, 
initial text=$ $, 
  initial distance=5mm,
  every initial by arrow/.style={*->},
  initial text={start}
}

\begin{example}[\cdg]\label{ex:conceptdependencygraph}
	For the TBox $\T=\{ A_2 \ISA A_1$, $\exists r.B_1 \ISA A_1$, $\exists r_3.B_1 \ISA B_3$, $A_3 \ISA A_2$, $\exists r_1.B_2 \ISA B_1$, $\exists r_2^-.\top\ISA A_3$, $s\ISA r_2$, $\exists r_2.B_3 \ISA B_2$, $B_1 \ISA \exists r_2.B_3\}$ the CDG $G_\T$
    is as follows. (For readability, we omit edges from Def. \ref{def:conceptdependencygraph} item \ref{def:conceptdependencygraph:item:top}). 
\begin{center}

\resizebox{!}{2cm}{%
\begin{tikzpicture}[node distance = 2cm]
\node[state] (b1) {${B_1}$};
\node[state, left of=b1] (b2) {${B_2}$};
\node[state, below of=b2] (b3) {${B_3}$};
\node[state, right of=b3,inner sep=1pt,minimum size=0pt,text width=6.5mm] (exr2B2) {\small $\ W_{\exists }$\\ \vspace{-6pt}$_{r_2.B_3}$};
\node[state, right of=b1] (a1) {${A_1}$};
\node[state, right of=a1] (a2) {${A_2}$};
\node[state, right of=a2] (a3) {${A_3}$};
\node[state, below of=a3](top){${\top}$};

\draw (b1) edge[above, bend left] node{$r_1$} (b2)
(b2) edge[below, bend left] node{$\varepsilon$} (b1)
(b2) edge[left] node{$r_2$} (b3)
(b3) edge[below] node{\phantom{a}$r_3$} (b1)
(b3) edge[above] node{$\varepsilon$} (exr2B2)
(b1) edge[right] node{$\exists r_2$} (exr2B2)
(a1) edge[above] node{$r$} (b1)
(a1) edge[above] node{$\varepsilon$} (a2)
(a2) edge[above] node{$\varepsilon$} (a3)
(a3) edge[above] node{$\varepsilon$} (exr2B2)
(a3) edge[left] node{$r_2^{-}$} (top);
\end{tikzpicture}
}
\end{center}

\end{example}
\change{Note that the auxiliary nodes $W_{\exists r.B}$ help propagate inferences involving existentially quantified objects, but they do not participate in propagating paths.} 

The propagating paths in the CDG witness entailments from $\T$.

\begin{example}
\label{ex:entailmentPropagatingPath}
    Let us continue with $\T$ and $G_\T$ from \Cref{ex:conceptdependencygraph} and the ABox $\A=\{r(n_0,n_1),r_1(n_1,n_2),r_1(n_2,n_3), r_2(n_3,n_4),B_3(n_4)\}$. 
    Consider the propagating path $A_1rB_1r_1B_2\varepsilon B_1r_1B_2r_2B_3$ of the \cdg~ that starts in $A_1$ and ends in $B_3$. This path propagates $\exists r \exists r_1 \exists r_1 \exists r_2.B_3$ from which we infer that $\T$ entails $\exists r\exists r_1\exists r_1\exists r_2.B_3\ISA A_1$ and therefore $(\T,\A)\models A_1(n_0)$.   
\end{example}

\noindent
To witness all relevant entailments, we also use \emph{witnessing sets}. 

\begin{definition}\label{def:witnessingSet}
    Consider a \cdg $G_\T$ with nodes $N_\T$. 
    We say that the set $W_A\subseteq N_\T \cap \conceptnames$ \emph{witnesses} $A$ (in $G_\T$) if (a) $A \in W_A$ or (b) $A$ has a propagating path to a conjunction all whose conjuncts are witnessed by $W_A$. 
    \changeNew{We say that $W_A$ is a \emph{proper} witnessing set for $A$ if no strict subset of it satisfies (a) and (b).} 
\end{definition}

\begin{example}\label{ex:witnessingSet}
    Consider the $\ontoLang$ TBox $\T=\{A_1\AND A_2\ISA A, B_1\AND B_2\ISA A_1, C_1\AND C_2\ISA A_2, \some r.C\ISA C_2\}$. The \cdg $G_\T$ has edges $E_\T=\{\varepsilon(A,A_1\AND A_2), \varepsilon(A_1,B_1\AND B_2), \varepsilon(A_2,C_1\AND C_2), r(C_2,C)\}$. Then the set of \change{proper} witnessing sets \change{for $A$} is 
    \[
    \mathbf{W}_A = \big \{    \{A\}, \{A_1,A_2\}, \{B_1,B_2,A_2\}, \{A_1,C_1,C_2\}, \{B_1,B_2,C_1,C_2\} \big \}
    \] 
    An entailment of the form $\T\models A_1\AND C_1\AND \exists r.C\ISA A$ is captured by a set $W_A=\{A_1,C_1,C_2\}$ witnessing $A$, where $C_2$ propagates $\exists r.C$. 
\end{example}

\noindent
With these notions in place, our CDG captures \emph{all} 
the relevant entailments. 

\begin{lemma}
    \label{lemma:SoundnessPropagatingPaths}
    Let $\T$ be an $\ontoLang$ TBox and let $G_\T$ be its \cdg.     If there is a proper witnessing set $W_A=\{A_1,\dots,A_n\}$ for $A$ \mo{and a set of concepts $\{C_1, \ldots, C_n\}$ (of the form $\some s_1\dots \some s_k.B$ with $k\ge 0$) such that, for each $1\le i \le n$, there is a path \co{in $G_\T$} from $A_i$ propagating $C_i$, then $\T \models C_1\AND\dots\AND C_n \ISA A$. }
\end{lemma}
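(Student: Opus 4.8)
The plan is to prove soundness of the CDG by induction on the structure that builds up entailments, with the propagating paths and witnessing sets serving as syntactic certificates of semantic consequences. First I would establish the basic lemma behind Definition~\ref{def:CDG_nodes_and_edges}: if there is a propagating path $p$ from $A_i$ in $G_\T$ with role label $r_1\cdots r_k$ that propagates the concept $C_i = \exists s_1\cdots \exists s_k.B$ (with $\transclosure{r_j}{s_j}$ and $\models B \ISA A_n$, where $A_n$ is the final node, a conjunction of concept names), then $\T \models C_i \ISA A_i$. This is shown by induction on the length of $p$. The base case ($p$ empty, so $A_i = A_n$) is trivial. For the inductive step, one inspects which clause of Definition~\ref{def:conceptdependencygraph} justified the first edge $\rho_1(A_i, A_1)$: an $\varepsilon$-edge arising from clause~\ref{def:conceptdependencygraph:item:NF1} corresponds to an axiom $A_1 \ISA A_i$ of shape~(\ref{NF1}) (here with a single conjunct, or the conjunction-closure clause~\ref{def:conceptdependencygraph:item:closeconjunction}); an $r$-edge from clause~\ref{def:conceptdependencygraph:item:NF3} corresponds to $\exists r.A_1 \ISA A_i$ of shape~(\ref{NF3})/(\ref{NF7}); the derived edges from clauses~\ref{def:conceptdependencygraph:item:roleAndInverse}, \ref{def:conceptdependencygraph:item:NF7_inv}, \ref{def:conceptdependencygraph:item:NF7}, \ref{def:conceptdependencygraph:item:close} are each justified by a combination of an inverse-role axiom of shape~(\ref{NF7}) (i.e.\ \ref{NF5}/\ref{NF7}) and a shorter propagating path already in $G_\T$. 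In every case, combining the axiom witnessing the first edge with the inductive hypothesis applied to the remaining (shorter) suffix path — and prefixing the appropriate $\exists s_1$ on both sides of the entailed inclusion, using monotonicity of $\exists r.(\cdot)$ under subsumption and the role inclusion $\transclosure{r_1}{s_1}$ — yields $\T \models C_i \ISA A_i$.

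Once this path-level soundness is in hand, the lemma follows by unpacking Definition~\ref{def:witnessingSet}. Given the proper witnessing set $W_A = \{A_1,\dots,A_n\}$, I would argue by induction on the recursive definition of ``$W_A$ witnesses $A$'': either (a) $A = A_i$ for some $i$, in which case we already have a path from $A_i$ propagating $C_i$ and hence $\T \models C_i \ISA A$, so $\T \models C_1 \AND \cdots \AND C_n \ISA A$ a fortiori; or (b) $A$ has a propagating path to a conjunction $B_{j_1} \AND \cdots \AND B_{j_m}$ all of whose conjuncts are witnessed by $W_A$. By clause~\ref{def:conceptdependencygraph:item:NF1}/\ref{def:conceptdependencygraph:item:closeconjunction} of the CDG together with the path-level soundness above, reading the path backwards, $\T \models B_{j_1} \AND \cdots \AND B_{j_m} \ISA A$; and by the induction hypothesis each $B_{j_\ell}$ is entailed by (a subconjunction of) $C_1 \AND \cdots \AND C_n$, so chaining these inclusions gives $\T \models C_1 \AND \cdots \AND C_n \ISA A$. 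The properness of $W_A$ is not essential for soundness but keeps the induction well-founded and the conjunction on the left-hand side minimal.

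The main obstacle, I expect, is the interaction between inverse roles and the auxiliary $W_{\exists r.B}$ nodes — clauses~\ref{def:conceptdependencygraph:item:roleAndInverse}, \ref{def:conceptdependencygraph:item:NF7_inv}, \ref{def:conceptdependencygraph:item:NF7} and~\ref{def:conceptdependencygraph:item:close}. These encode inferences of the form: if an individual has an outgoing $r$-edge whose target must satisfy $\exists s^-.\top \ISA B$, then the source individual ``feels'' the concept $B$ back along the path. Making the entailment argument rigorous here requires carefully tracking, in an arbitrary model $\I$ of $\T$, the element reached along the propagating path and using the semantics of $\exists r^-.\top$ together with the role-inclusion constraints $\transclosure{s}{r}$; the fact (noted after Example~\ref{ex:conceptdependencygraph}) that the $W_{\exists r.B}$ nodes do not themselves lie on propagating paths is what keeps these cases from creating spurious cycles. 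I would handle this by treating the ``$B$ propagated backwards'' step as a lemma of its own: whenever $G_\T$ has a propagating path with role label $r$ from $A$ to some $C$ with $\exists s^-.\top \ISA B \in \T$, $\transclosure{s}{r}$, and (a suitable image of) $C$ lying on the relevant $r$-chain, then in every model every element in a concept that forces that $r$-path out of it also lies in $B^\I$; clause~\ref{def:conceptdependencygraph:item:roleAndInverse} and clause~\ref{def:conceptdependencygraph:item:close} then correspond exactly to re-entering this $B$ into the forward propagation. Everything else is a routine structural induction.
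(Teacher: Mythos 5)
Your plan matches the paper's proof: both argue in two stages, first showing by induction on the length of a propagating path (with a case analysis on which clause of Definition~\ref{def:conceptdependencygraph} generated each edge, the inverse-role and $W_{\exists r.B}$ clauses being the delicate ones) that a path from $A_i$ propagating $C_i$ yields $\T \models C_i \ISA A_i$, and then by a second induction on the witnessing-set derivation that $\T \models A_1 \AND \cdots \AND A_n \ISA A$, combining the two at the end. The only cosmetic difference is that you peel off the first edge of the path where the paper peels off the last; this changes nothing of substance.
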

\noindent
To show the converse \Cref{lemma:completenessPropagatingPaths}
we construct a model $\I_{\A,\T}$ from a given ABox $\A$ that makes true exactly the $\ontoLang$ inclusions witnessed by $G_\T$. 

\begin{lemma}
    \label{lemma:completenessPropagatingPaths}
    Let $\T$ be a $\ontoLang$TBox and let $G_\T$ be its \cdg. 
    If $\T \models C_1\AND\dots\AND C_n \ISA A$ where $C_1,\dots, C_n$ are concepts of the form $\some s_1\dots \some s_k.B$ with $k\ge 0$ and $A,B\in\conceptnames$, then there is a set $W_A = \{A_1, \cdots, A_n\}$ that witnesses $A$ \mo{and such that for each $1 \leq i \leq n$ there is a path \co{in $G_\T$} from $A_i$ that propagates $C_i$.}
\end{lemma}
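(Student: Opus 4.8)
The plan is to prove the contrapositive-style statement constructively: from the CDG $G_\T$ and a given ABox-like structure, build a canonical model $\I_{\A,\T}$ whose satisfied $\ontoLang$-inclusions are exactly those witnessed by propagating paths and witnessing sets in $G_\T$; then, if $\T \models C_1 \sqcap \cdots \sqcap C_n \sqsubseteq A$ but no such witnessing set with accompanying propagating paths exists, construct a model of $\T$ violating the inclusion, contradiction. Concretely, I would fix a tree-shaped ABox $\A$ realizing the "generator" of each $C_i = \exists s^i_1 \dots \exists s^i_{k_i}.B_i$: a root $a$, and for each $i$ a chain $a \xrightarrow{s^i_1} \cdots \xrightarrow{s^i_{k_i}} b_i$ with $B_i \in \mathsf{label}(b_i)$. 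Then I would close $\A$ under $\T$ in the standard chase-like fashion to obtain $\I_{\A,\T}$, tracking which concepts hold at the root $a$. The key claim is: $A \in (\ontoLang\text{-closure of }\A \text{ w.r.t.\ } \T)$ at node $a$ if and only if there is a proper witnessing set $W_A = \{A_1,\dots,A_n\}$ for $A$ in $G_\T$ together with paths from each $A_i$ propagating $C_i$.

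The heart of the argument is matching chase derivations against CDG structure. I would proceed by induction on the length/structure of the chase derivation of $A(a)$. The base cases correspond to $A = B_i$ reachable directly, or $A = \top$ (CDG item \ref{def:conceptdependencygraph:item:top}). The inductive cases mirror the CDG edge clauses: an application of an axiom $\exists r.A' \sqsubseteq A$ of shape (\ref{NF3}) corresponds to following an $r$-propagating edge (item \ref{def:conceptdependencygraph:item:NF3}), extending a propagating path; an application of $A_1 \sqcap \cdots \sqcap A_m \sqsubseteq A$ of shape (\ref{NF1}) corresponds to reaching the conjunction node and the $\varepsilon$-edge of item \ref{def:conceptdependencygraph:item:closeconjunction} — here the recursion branches, and we take the union of the witnessing sets obtained for each $A_j$ (this is exactly where \emph{witnessing sets}, rather than single paths, are needed, cf.\ \Cref{ex:witnessingSet}); and applications of inverse-role axioms (\ref{NF5})/(\ref{NF6}) correspond to items \ref{def:conceptdependencygraph:item:roleAndInverse}, \ref{def:conceptdependencygraph:item:NF7_inv}, \ref{def:conceptdependencygraph:item:NF7}, \ref{def:conceptdependencygraph:item:close}, which is where the restriction to $\ontoLang$ (only $\exists r^-.\top$, no genuine non-local inverse) keeps the interaction finite and tree-bounded. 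I would use \Cref{lemma:SoundnessPropagatingPaths} for the "easy" direction (everything the CDG witnesses is genuinely entailed) and only need the new direction here.

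The main obstacle I anticipate is the inverse-role machinery: when the chase generates a fresh existential successor (via an axiom $A' \sqsubseteq \exists r.B'$, shape (\ref{NF3})/(\ref{NF4})), a subsequent axiom $\exists r^-.\top \sqsubseteq A''$ fires \emph{backwards} at the parent, and one must show this "fold-back" is faithfully captured by the auxiliary $W_{\exists r.B}$ nodes and CDG items \ref{def:conceptdependencygraph:item:NF4}--\ref{def:conceptdependencygraph:item:close}. One has to argue that $W_{\exists r.B}$ nodes, while not lying on propagating paths themselves (as the remark after \Cref{ex:conceptdependencygraph} stresses), correctly relay the concepts forced at the anonymous successor back to concepts forced at the node that generated it, and that no relevant entailment is lost by this bounded treatment — this is precisely the place where $\ontoLang$'s syntactic restrictions (existentials over named roles only have $\top$ fillers, conjunctions produce only local concepts) are used to guarantee that the relevant derivations have the branching/path structure the CDG encodes. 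I would also need to handle role inclusions uniformly by working with $\transclosure{}{}$ throughout, and to justify termination of the relevant part of the chase (only finitely many concept names and $W_{\exists r.B}$ nodes, so the root's label stabilizes), so that "$\T \models C_1 \sqcap \cdots \sqcap C_n \sqsubseteq A$" is equivalent to "$A(a)$ holds in $\I_{\A,\T}$", closing the proof.
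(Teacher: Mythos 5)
Your plan is correct and, at the top level, matches the paper's strategy: argue via a canonical countermodel over the tree-shaped ABox that realizes each $C_i$ as a role-assertion chain hanging off a common root, and show that the root can only acquire $A$ when $G_\T$ supplies a witnessing set with matching propagating paths. Where you differ is in how the case analysis is organized. The paper does not run the standard chase and simulate it; it \emph{defines} the interpretation directly from the CDG (\Cref{def:modelConstruction}), so that concept memberships are stipulated to be exactly those licensed by witnessing sets and propagating paths. Completeness is then almost built in, and the real work is pushed into \Cref{lemma:validModel}, which verifies axiom shape by axiom shape (including the inverse-role fold-back through the $W_{\exists r.B}$ nodes) that this interpretation is in fact a model of $\T$; the lemma's own proof only needs a short induction on $k$ to show the root fails to get $A_j$ when no propagating path exists. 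Your version instead takes the standard chase model as given and proves a simulation lemma by induction on derivations, matching each rule application to a CDG edge clause — the same case analysis, relocated into one induction, and you correctly single out the $W_{\exists r.B}$ fold-back as the delicate step. Two things to make explicit if you write this out: first, the induction invariant must be stated for \emph{all} nodes of the chased structure, not just the root — e.g., if $D$ is derived at the node in position $i$ of the $j$-th chain, then $G_\T$ must have a path from $D$ propagating the remaining suffix $\exists s_{i+1}\cdots\exists s_{k_j}.B_j$ — since derivations at the root depend on derivations at descendants and at anonymous successors; second, you still owe an argument that your chase model satisfies $\T$ at the anonymous part (which the paper gets from \Cref{lemma:validModel}), though for the standard chase this is routine.
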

The key to our rewriting of atomic queries is that
instance checking for a concept $A_0$ reduces to finding a set of ABox assertions in the given ABox that \bl{match a set of propagating paths that witnesses $A_0$}. 
\begin{lemma}
\label{lemma:relation}
Let $\T$ be an $\ontoLang$ TBox \co{ and let $G_\T$ be its \cdg.} 
For each ABox $\A$,  $A \in \conceptnames$ and individual $n_0$, we have 
$\T,\A \models A(n_0)$ iff there is a proper witnessing set $W_A=\{A_1,\dots,A_n\}$ for $A$ such that for each $A_j\in W_A$ there exists 
\textit{(i)} a (possibly empty) path $p_j$ \co{ in $G_\T$ }
with role label $(r^j_1\cdots r^j_k)$ that propagates a concept $\some s_1 \dots\some s_k.B_j$; and 
\textit{(ii)} a sequence of individuals $n^j_{1},\dots,n^j_{k}$ such that $B_j(n^j_{k}) \in \A$
and  $r^j_i(n^j_{i},n^j_{(i+1)}) \in \A$
for each $1 \leq i < k$.
\end{lemma}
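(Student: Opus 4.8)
The plan is to reduce \Cref{lemma:relation} to the two entailment lemmas already established, namely \Cref{lemma:SoundnessPropagatingPaths} and \Cref{lemma:completenessPropagatingPaths}, by bridging between ``ABox-level'' witnesses (sequences of assertions in $\A$) and ``TBox-level'' witnesses (entailed concept inclusions $\some s_1\dots\some s_k.B_j \ISA A$). First I would fix notation: given a concept $C_j = \some s_1\dots\some s_k.B_j$ and a sequence of individuals $n^j_1,\dots,n^j_k$ with $B_j(n^j_k)\in\A$ and $r^j_i(n^j_i,n^j_{i+1})\in\A$ with $\transclosure{r^j_i}{s_i}$, I would observe that these assertions entail $C_j(n^j_1)$ in any model of $(\T,\A)$ — this is the routine ``unfolding'' of an $\mathcal{ELHI}$ existential chain together with role inclusion monotonicity. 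Symmetrically, if $\T,\A\models C_j(n_0)$ for some such $C_j$, then... actually the cleanest route avoids proving a converse at the ABox level directly.

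For the \emph{if} direction: suppose there is a proper witnessing set $W_A=\{A_1,\dots,A_n\}$ and, for each $A_j$, a propagating path $p_j$ with role label $(r^j_1\cdots r^j_{k_j})$ propagating some $C_j = \some s^j_1\dots\some s^j_{k_j}.B_j$, together with individuals $n^j_1,\dots,n^j_{k_j}$ witnessing $C_j(n^j_1)$ as above, where $n^j_1 = n_0$ (the path starts at $n_0$). By \Cref{lemma:SoundnessPropagatingPaths}, $\T\models C_1\AND\dots\AND C_n\ISA A$. Since each chain of assertions forces $C_j(n_0)$ in every model of $(\T,\A)$, we get $C_1\AND\dots\AND C_n(n_0)$, hence $A(n_0)$; thus $\T,\A\models A(n_0)$.

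For the \emph{only if} direction: assume $\T,\A\models A(n_0)$. I would build the canonical model $\I_{\A,\T}$ (the one used in the proof of \Cref{lemma:completenessPropagatingPaths}, which realizes exactly the $\ontoLang$ inclusions witnessed by $G_\T$) and argue that, since $A(n_0)$ holds there, membership of $n_0$ in $A^{\I_{\A,\T}}$ must have been ``caused'' by a bounded conjunction of existential-chain facts rooted at $n_0$: concretely, there is a finite set of concepts $C_1,\dots,C_n$ of shape $\some s_1\dots\some s_k.B$ with $\T,\A\models C_j(n_0)$ for each $j$ and $\T\models C_1\AND\dots\AND C_n\ISA A$. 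Each $\T,\A\models C_j(n_0)$ is witnessed, in $\I_{\A,\T}$, by a path of role-edges and a concept assertion that — because $\I_{\A,\T}$ adds no ``spurious'' edges to the ABox part beyond what the CDG mandates — can be taken to lie entirely within $\A$, yielding the required individuals $n^j_1=n_0,\dots,n^j_k$. Then \Cref{lemma:completenessPropagatingPaths} applied to $\T\models C_1\AND\dots\AND C_n\ISA A$ delivers a witnessing set $W_A=\{A_1,\dots,A_n\}$ for $A$ with, for each $j$, a path in $G_\T$ from $A_j$ propagating $C_j$; projecting to a \emph{proper} witnessing subset (possible by \Cref{def:witnessingSet} and monotonicity — dropping redundant $A_j$'s and their $C_j$'s) gives the claimed statement.

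The main obstacle is the ``only if'' direction, specifically the claim that in the canonical model $\I_{\A,\T}$ the witness for $A(n_0)$ decomposes into existential chains that stay inside $\A$ and whose roots all coincide with $n_0$. This requires a careful analysis of how $\I_{\A,\T}$ is generated — essentially an induction on the CDG-driven saturation procedure — to rule out that $n_0$'s membership in $A$ relies on non-ABox (anonymous) individuals in an essential way, or on branching that is not captured by a single conjunction of $\some$-chains; here the restrictions defining $\ontoLang$ (locality of conjuncts, and $\some r.C$ having $r\in\rolenames$ or $C=\top$) are exactly what make the decomposition finite and ABox-local. I would handle this by reusing the structure of the model construction from \Cref{lemma:completenessPropagatingPaths}'s proof rather than re-deriving it, and by a level/rank argument on when $n_0$ first enters $A^{\I_{\A,\T}}$ during saturation.
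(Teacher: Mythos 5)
Your proposal is correct and follows essentially the same route as the paper, which itself only remarks that the lemma ``almost immediately follows'' from \Cref{lemma:SoundnessPropagatingPaths} and \Cref{lemma:completenessPropagatingPaths}: the \emph{if} direction unfolds the ABox chains into $\T,\A\models C_j(n_0)$ and applies \Cref{lemma:SoundnessPropagatingPaths}, while the \emph{only if} direction rests on the canonical model $\I_{\A,\T}$ built in the proof of \Cref{lemma:completenessPropagatingPaths}, whose saturation rules add $n_0$ to $A$ only when a witnessing set and matching ABox chains exist. You correctly locate the only real work in the ABox-locality of the witnesses (which is exactly what the CDG closure rules and the locality restriction of $\ontoLang$ guarantee), so the sketch matches the paper's intended argument.
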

\bl{Observe that in an $\ontoLang$ TBox, $A$ is local in all axioms $C_1\AND\dots\AND C_n\ISA A$,
and also every $B$ for which $\T\models A\ISA B$ holds must be local. Thus, $A$ can never participate on the left-hand  side of axioms of the form $\exists r.A\ISA C$, which could lead to entailments of the form like $\T\models \some r.(C_1\AND\dots\AND C_n)\ISA C$. In other words, all relevant entailments of an $\ontoLang$ TBox take the form $\T\models C_1\AND\dots\AND C_n\ISA A$.} Hence this Lemma is not hard to show using \Cref{lemma:SoundnessPropagatingPaths} and \Cref{lemma:completenessPropagatingPaths}. 
\ifArxiv
\newline The proofs of the above claims can be found in \Cref{app:ProofsSection StandardReasoning}.
\else
\changeNew{The proofs to the above claims can be found in the full version~\cite{DBLP:journals/corr/abs-2405-18181}.}
\fi

\section{Rewriting Atomic Queries}\label{sec:rewritingAQs}

In this section we present a rewriting for instance queries under $\ontoLang$ TBoxes.
We do this relying on CDGs, which represent the witnessing sets and propagating paths that we have proved to provide all relevant entailments. 

We start by defining a non-deterministic finite automaton (NFA) that captures all paths in $G_\T$ that propagate a concept $C$. 

\begin{definition}[Path-Generating NFA]\label{def:pathGeneratingNFA}
    Let $G_\T$ be a CDG, and let $Q_\T$ contain the nodes in $G_\T$ that are $\ontoLang$ concepts. 
    For $B \in \conceptnames$, the \emph{$B$-path-generating automaton of $G_\T$}, termed $G_\T^{\mathbb{A}}(B)$, is the NFA 
    $\langle Q_\T, \Sigma,  \delta, q_0,F\rangle$ with: 
 \[
    \Sigma = \allroles \cup \conceptnames \cup  \{ \varepsilon \} \text{\phantom{space}}  
        q_0 = B \text{\phantom{space}} F = \{ \top \}  
\]
    \begin{flalign*}
    \delta(n_0,\alpha) &= 
        \begin{cases} 
         n_1 & \alpha(n_0,n_1) \in E_\T \\
          \top & \text{ if }  \alpha \in \conceptnames 
        \end{cases}
    \end{flalign*} 
\end{definition}

\begin{example}
In order to illustrate how the path-generating NFAs are constructed, we look at the CDG $G_\T$ given in \Cref{ex:conceptdependencygraph} and show its corresponding $B_1$-path-generating NFA $G_\T^{\mathbb{A}}(B_1)$ with the extracted RPE for $B_1$ underneath it. 
\begin{center}

\resizebox{!}{2.5cm}{%
\begin{tikzpicture}[node distance = 2cm]
\node[state, initial below] (b1) {${B_1}$};
\node[state, accepting,above of=b1] (omega) {$\top$};
\node[state,right of=b1](a1){${A_1}$};
\node[state,right of=a1](a2){${A_2}$};
\node[state, right of = a2](a3){${A_3}$};
\node[state, left of = b1](b2){${B_2}$};
\node[state, left of = b2](b3){${B_3}$};

\draw (a1) edge[above] node{r} (b1)
(a1) edge[above] node{$\varepsilon$} (a2)
(a2) edge[above] node{$\varepsilon$} (a3)
(b1) edge[right] node{$B_1$} (omega)
(a1) edge[right] node{$A_1$} (omega)
(a2) edge[right] node{$A_2$} (omega)
(a3) edge[right] node{\phantom{a}$A_3$} (omega)
(a3) edge[bend right, below] node{$r_2 ^{-} $} (omega)
(b2) edge[right] node{$B_2$} (omega)
(b3) edge[above] node{$B_3$} (omega)
(b2) edge[bend right, above] node{$\varepsilon$} (b1)
(b1) edge[bend right, above] node{$r_1$} (b2)
(b2) edge[above] node{$r_2$} (b3)
(b3) edge[bend right, below] node{$r_3$} (b1);
\end{tikzpicture}}
\end{center}
\begin{align*}
    \big ( \overbrace{( (r_1\union(r_1r_2r_3))^*B_1)}^{\text{paths ending in } B_1}
    \union \overbrace{( r_1^+ (r_2r_3r_1^+)^* B_2)}^{\text{paths ending in } B_2}
    \union \overbrace{( r_1^+ r_2 (r_3r_1^+r_2)^* B_3)}^{\text{paths ending in } B_3}      \big ) 
\end{align*}
\end{example}


We use $\translate{B}$ to denote the regular path expression constructed from $G_\T^{\mathbb{A}}(B)$ (using standard techniques);  note that here $\varepsilon$ is the usual NFA $\varepsilon$-transition. 
Clearly, $\translate{B}$ contains exactly the words of the form 
    $s_1, \dots, s_k A$ such that there is an $(s_1, \dots, s_k)$-propagating path from $B$ to $A$ in $G_\T$. 



We now provide a rewriting algorithm for queries of the form $q(x) := A(x)$. We first extract from $G_\T$ the witnessing sets for $A$, and then use   
$\translate{B}$ to consider all propagating paths for the concept names in each witnessing set.  
Using Lemma~\ref{lemma:relation}, it is not hard to show that Algorithm~\ref{algo:rewriteAQ} is a sound and complete query rewriting algorithm for atomic queries of the form $q(x) := A(x)$.

\begin{proposition}
Let $\T$ be an $\ontoLang$ TBox and $q(x) := A(x)$ a C2RPQ.
For every ABox $\A$ and tuple $\vec{a}$ of individuals, 
we have that $\vec{a}$ is a certain answer to $q(\vec{x})$ over $(\T,\A)$ if and only if $\vec{a}$ is an answer to $\mathtt{rewriteAtomQuery}(\vec{x})$ over $\A$.
\end{proposition}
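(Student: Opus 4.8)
The plan is to establish soundness and completeness of $\mathtt{rewriteAtomQuery}$ by reducing both directions to \Cref{lemma:relation}, which already characterizes exactly when $\T,\A \models A(n_0)$. The algorithm outputs a union of C2RPQs, so the answer to the rewriting over $\A$ is, by the walk semantics, the union over the individual disjuncts. Each disjunct is built from one proper witnessing set $W_A = \{A_1, \dots, A_n\}$ for $A$ (extracted from $G_\T$), together with the regular path expressions $\translate{A_j}$; intuitively, the disjunct asks, for the candidate answer $x$, that for each $A_j \in W_A$ there is a path in $\A$ from some node into $x$ matching a word of $\translate{A_j}$, i.e.\ a word $s^j_1 \cdots s^j_k B_j$ such that there is an $(s^j_1 \cdots s^j_k)$-propagating path from $A_j$ to $B_j$ in $G_\T$. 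The first step is therefore to make precise the correspondence between (i) a match of such a disjunct over $\A$ with $x \mapsto n_0$, and (ii) the data-side witness described in \Cref{lemma:relation}: a sequence of individuals $n^j_1, \dots, n^j_k$ with $B_j(n^j_k) \in \A$ and $r^j_i(n^j_i, n^j_{i+1}) \in \A$ for $1 \le i < k$. This is essentially unwinding the definition of $\eval{\cdot}{\A}$ on the RPEs appearing in the disjunct, using the fact established earlier that $\translate{B}$ contains exactly the words $s_1 \cdots s_k A$ corresponding to propagating paths from $B$ to $A$ in $G_\T$.

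For the ``if'' direction (completeness of the rewriting): suppose $\vec a = n_0$ is a certain answer to $q(x) := A(x)$, i.e.\ $\T,\A \models A(n_0)$. By \Cref{lemma:relation} there is a proper witnessing set $W_A = \{A_1, \dots, A_n\}$ and, for each $A_j$, a propagating path $p_j$ in $G_\T$ with role label $(r^j_1 \cdots r^j_k)$ propagating $\exists s_1 \cdots \exists s_k.B_j$, and a sequence of individuals in $\A$ witnessing it as above. Since $\mathtt{rewriteAtomQuery}$ enumerates all proper witnessing sets for $A$, the disjunct for this particular $W_A$ is present in the output; and since $\translate{A_j}$ captures all words arising from propagating paths out of $A_j$, the word $r^j_1 \cdots r^j_k B_j$ (up to subrole adjustment via $\transclosure{r^j_i}{s_i}$, which the RPE already accounts for) is in $\translate{A_j}$, so the individual sequence from $\A$ gives a match of the corresponding atom with endpoint $n_0$. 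Assembling these matches across all $j$ yields a mapping $\mu$ witnessing $n_0$ as an answer to that disjunct, hence to the whole union.

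For the ``only if'' direction (soundness): suppose $n_0$ is an answer to $\mathtt{rewriteAtomQuery}(x)$ over $\A$. Then $n_0$ is an answer to some disjunct, which corresponds to a proper witnessing set $W_A = \{A_1, \dots, A_n\}$ that the algorithm genuinely extracted from $G_\T$, so $W_A$ really does witness $A$. The match of the disjunct over $\A$, together with the characterization of $\translate{A_j}$, gives for each $j$ a word $r^j_1 \cdots r^j_k B_j \in \translate{A_j}$ realized by a sequence of $\A$-individuals ending at $n_0$-reachable data, i.e.\ precisely conditions \textit{(i)} and \textit{(ii)} of \Cref{lemma:relation}. That lemma then yields $\T,\A \models A(n_0)$, which is exactly certain-answer membership for $q(x) := A(x)$. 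One subtlety to handle carefully is the direction of the paths and the role of the final node $x$: the propagating paths in $G_\T$ are read from $A_j$ ``downward'', while in the query/data they are traversed from $n_0$ ``outward'', so one must check that the RPE $\translate{A_j}$ is plugged into the atom with the correct orientation (and that inverse roles $r^-$ in the role label are matched against inverse edges in $\A$ consistently); the NCQ/C2RPQ two-way semantics makes this bookkeeping routine but it is the place where an error is easiest to make.

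\textbf{Main obstacle.} The only real work is the bidirectional translation between ``a homomorphic match of a union-of-C2RPQs disjunct over the property graph $\A$'' and ``the combinatorial witness object of \Cref{lemma:relation}''; once that bridge is in place, both directions are immediate from \Cref{lemma:relation} plus the already-noted exact characterization of $\translate{B}$. I expect the fiddly part to be verifying that \emph{proper} witnessing sets suffice (no answers are lost by restricting to proper ones) — but this is inherited directly from \Cref{lemma:relation}, which is stated in terms of proper witnessing sets, so it requires no new argument here.
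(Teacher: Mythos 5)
Your proposal is correct and follows essentially the same route as the paper, which itself only remarks that the proposition follows from \Cref{lemma:relation} together with the observation that $\translate{B}$ contains exactly the words arising from propagating paths out of $B$. Your write-up simply makes explicit the bridge between matches of a disjunct over $\A$ and the combinatorial witness of \Cref{lemma:relation} (including the orientation and subrole bookkeeping), which is precisely the "not hard to show" step the paper leaves implicit.
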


\setlength{\textfloatsep}{1pt}
\begin{algorithm}[t]
\small
	\caption{Rewriting atomic concept queries 
	}
    \label{algo:rewriteAQ}
    
    \SetKwInOut{Input}{Input}
    \SetKwInOut{Output}{Output}
    \SetKwFunction{FrewriteAtomicConceptQuery}{$\mathtt{rewriteAtomQuery}$}
    \SetKwFunction{FrewriteRole}{$\mathtt{rewriteRole}$}
    \SetKwFunction{FWitness}{$\mathtt{witnessSets}$}
    \SetKwFunction{FrewriteConcept}{$\mathtt{rewrConcept}$}
    \SetKwProg{Fn}{function}{:}{}

    \Input{$A(x)$, $G_\T$, $\T$}
    \Output{UC2RPQ $Q$}
    \Fn{\FrewriteAtomicConceptQuery{$A(x),G_\T$}}{
        $Q:=\emptyset$ \\
        \ForEach{$\{B_1,\dots,B_n\}$ in \FWitness{$A,G_\T$}}{
            $q(x):=$ \FrewriteConcept{$B_1,G_\T$}$(x,y_1)\land \dots \land$ \FrewriteConcept{$B_n,G_\T$}$(x,y_n)$\\
            $Q:=Q\cup \{q(x)\}$\\
        }
        \Return $Q$
    }       

    \Fn{\FrewriteRole{$r$, $\T$}}{
        \label{line:UnionsRoles}  
        \Return $\bigcup r_i$ where $\transclosure{r_i}{r}$ \label{line:returnRewritingRole}
    }

    \Fn{\FrewriteConcept{$A$, $G_\T$}\label{func:rewriteRole}}{
        $\pi := \translate{A}$  \label{line:translate} \\
        \ForEach(\label{line:forPi}){$r \in \pi$}  {
            \label{line:superRoles}  $\pi := \pi[r \backslash$ \FrewriteRole{r,$\T$ }$]$
        }
        \Return $\pi$ \label{line:returnRewritingConcept}
    }
    \Fn{\FWitness{$A$, $G_\T$}\label{func:witnessSets}}{
        $\mathbf{W_A}:=\{\{A\}\}; \mathbf{W'_A}:=\emptyset$\\
        \While{$W_A\neq W'_A$}{
            $\mathbf{W'_A}:= \mathbf{W_A}$ \\
            \If{$B$ propagates $B_1\sqcap\dots\sqcap B_n$ in $G_\T$ and $W\in \mathbf{W}'_A$ with $B\in W$}{
                $\mathbf{W_A}:=\mathbf{W_A}\cup \big ( (W\setminus \{B\}) \cup \{B_1, \cdots B_n \} \big )$
            }
        }
        \Return $\mathbf{W_A}$
    }  
\end{algorithm}

\noindent
\textbf{Description of Algorithm \ref{algo:rewriteAQ}.} 
\changeNew{The function $\mathtt{witnessSets}$ takes an input a concept $A$ and a CDG, and iteratively computes all the proper witnesses by exhaustively replacing a concept with a conjunction whenever a corresponding path is found. 
The function $\mathtt{rewriteRole}$ simply checks for any other roles that imply $r$ w.r.t. $\T$ and produces the union over all of them. 
In case of function $\mathtt{rewrConcept}$, we first extract the RPE, as described above, 
and then use $\mathtt{rewriteRole}$ to 
replace each role in the produced RPE with the union of its subroles.
Finally, $\mathtt{rewriteAtomQuery}$ brings everything together: given an atomic query $A(x)$, we compute its witnessing sets and for each such set, rewrite each concept name into an RPE using function $\mathtt{rewrConcept}$. Note that as $A(x)$ is replaced by a set of atoms that may match to regular  paths, we introduce a fresh variable $y_i$ for each of them.  
We produce a C2RPQ by forming the conjunction of these atoms.
The output of $\mathtt{rewriteAtomQuery}$ is the union of these C2RPQs.}

\section{Rewriting Navigational Queries}\label{sec:rewriting}
In this section we provide an algorithm for rewriting navigational queries. \changeNew{Its pseudo-code description is given in \Cref{app:Algorithm2}.}
Unfortunately, it is in general not possible to rewrite C2RPQs into \change{unions of C2RPQs (UC2RPQs)} for any ontology language that allows $\T \models \exists r.\top \ISA \exists s.\top$ for role names $r$ and $s$, as we show in~\Cref{thm:c2rpq_rewriting}.
\ifArxiv
The proofs for this section can be found in~\Cref{app:ProofsSectionRewriteNQs}.
\else
\changeNew{The proofs for this section are in the full version~\cite{DBLP:journals/corr/abs-2405-18181}.}
\fi
\begin{theorem}
    \label{thm:c2rpq_rewriting}
    There exist C2RPQs that cannot be rewritten into UC2RPQs w.r.t. TBoxes containing concepts of the form $\exists r.\top$ on both sides of CIs. 
    This holds already for C2RPQs with only one atom.
\end{theorem}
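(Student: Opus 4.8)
The plan is to exhibit a concrete, minimal counterexample: take a TBox $\T = \{\exists r.\top \ISA \exists s.\top\}$ (so that any node with an outgoing $r$-edge is forced to have an outgoing $s$-edge in every model), and the single-atom C2RPQ $q(x,y) := s^+(x,y)$ (or $q() := \some y. s^+(x,y)$ as a Boolean query with a designated source). The key point is that the canonical model of an ABox $\A$ under $\T$ attaches, to every node that starts an $r$-path of length $n$, a fresh $s$-path — but crucially each such node gets only a single fresh $s$-successor, which in turn (if it has no $r$-edge) gets no further successors. So a long $s$-path in the canonical model can only be built by chaining through ABox nodes: from an ABox node $a_i$ with an $r$-edge we reach one fresh $s$-successor, and to continue the $s$-path we must have come back to the ABox. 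Concretely, over the ABox $\A_n = \{r(a_0,a_1), r(a_1,a_2), \dots, r(a_{n-1},a_n)\} \cup \{s(a_0,a_1),\dots\}$ — I will design $\A_n$ so that the certain answers to $q$ require following an $s$-path whose length is tied to $n$, but no bounded C2RPQ can "count" this unboundedly while also being forced to traverse the $r$-structure.

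The main technical step is a pumping / locality argument against UC2RPQs. Suppose for contradiction that some UC2RPQ $Q = \bigcup_i q_i$ is a rewriting of $q$ w.r.t. $\T$. Each $q_i$ has a fixed number of atoms and each atom's RPE is a fixed regular expression, so there is a global bound $k$ on the structural complexity of $Q$ (number of atoms, sizes of the NFAs for the RPEs). I would then build a family of ABoxes $\A_n$ (for $n$ large relative to $k$) such that: (i) $(\T,\A_n) \models q$ holds — witnessed by an $s$-path in the canonical model that genuinely uses the entailed $s$-edges forced by the $r$-chain, so it is not already an answer over $\A_n$ alone; and (ii) any match of any $q_i$ into $\A_n$ (note: $Q$ must be evaluated over $\A_n$, \emph{not} over the canonical model) can be "pumped" — either a sub-path of an RPE match repeats a configuration and can be shortened, or lengthened, yielding an ABox $\A'$ on which $q_i$ still matches but $(\T,\A') \not\models q$. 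This contradicts soundness or completeness of the rewriting. The cleanest incarnation: make $\A_n$ essentially a single $r$-path $a_0 \xrightarrow{r} a_1 \xrightarrow{r} \cdots \xrightarrow{r} a_n$ with no $s$-edges in the ABox at all; then $(\T,\A_n) \models \some y.\, s(a_0,y)$ but $(\T,\A_n) \not\models \some y.\, s^+(a_0,y)$ with path length $\geq 2$, because the canonical $s$-successor of $a_0$ has no $r$-edge and hence no $s$-successor. Meanwhile $(\T, \A_n \cup \{s(a_0,a_0)\}) \models s^+(a_0,a_0)$ trivially. Playing off ABoxes that agree on a large neighbourhood but differ in whether some far-away node has an $r$-edge, and using that a bounded UC2RPQ cannot distinguish them while the certain answers to $q$ do, gives the contradiction.

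The obstacle I expect to be hardest is getting the pumping argument to respect the \emph{two-way} and \emph{regular} nature of C2RPQ atoms simultaneously with the fact that $Q$ is evaluated over the bare ABox while $q$'s certain answers are defined via all models: a naive Ehrenfeucht–Fraïssé / locality argument for plain CQs does not immediately apply because RPEs can traverse unbounded paths. The way around it is to note that an RPE match over $\A_n$ is a walk in a finite graph (the product of $\A_n$ with the RPE's NFA), so a match on a sufficiently long homogeneous segment of $\A_n$ must revisit a product-state; that revisit can be cut out or iterated, producing $\A'$ with a shorter/longer $r$-segment on which $Q$ still returns the tuple but the certain-answer status of $q$ flips (since whether the canonical model contains a long enough $s$-path — or, in the single-forced-$s$-edge design, whether the relevant node even has the needed structure — depends on the exact length/shape, which $Q$ cannot track). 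I would isolate this as a lemma: "for $n > f(|Q|)$, if $q_i$ matches $\A_n$ then $q_i$ matches some $\A_{n'}$ with $n' \neq n$ and $n'$ chosen so that $(\T,\A_{n'}) \not\models q$", and then the theorem follows immediately. Finally, since the whole construction uses a single-atom query $q(x,y) := s^+(x,y)$, the last sentence of the statement ("already for C2RPQs with only one atom") is automatic.
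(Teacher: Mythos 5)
There is a genuine gap: the concrete counterexample you commit to is in fact rewritable, so the contradiction you aim for cannot arise. With $\T=\{\exists r.\top \sqsubseteq \exists s.\top\}$, every entailed $s$-edge points to a fresh anonymous element that has no outgoing edges at all; hence any $s$-path in the canonical model that is longer than one edge must stay inside the ABox except possibly for its very last step. Consequently the certain answers to $q(x,y):=s^+(x,y)$ coincide with the answers of $s^+$ over the bare ABox, and the certain answers to $q(x):=s^+(x,y)$ (with $y$ existential) are exactly the answers of the single C2RPQ $s^*(s\union r)(x,y)$ over the bare ABox. Your key lemma, ``if $q_i$ matches $\A_n$ then it matches some $\A_{n'}$ with $(\T,\A_{n'})\not\models q$,'' is therefore false for this $q$: a correct bounded rewriting exists, and no choice of the family $\A_n$ can rescue the argument.

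The missing idea is that non-rewritability requires a query that performs an \emph{existential test} at unboundedly many positions inside a Kleene star, i.e., a there-and-back pattern such as $p\,p^-$ interleaved with genuine ABox navigation. The paper uses $q(x,y)=(r\,p\,p^-)^+(x,y)$ with $\T=\{\exists s.\top\sqsubseteq\exists p.\top\}$: every intermediate node of a match must both continue the $r$-chain and possess an (entailed) $p$-successor, so it needs total degree at least $3$. The impossibility is then established not by pumping but by an unfolding argument: assuming a UC2RPQ rewriting with at most $N$ atoms per disjunct, one evaluates it on a witness chain of length $K>N+1$, and from the resulting match builds a new ABox consisting of at most $N$ fresh \emph{linear} semi-paths realizing the matched words; there all but boundedly many nodes have total degree at most $2$, so $(r\,p\,p^-)^+$ cannot be matched even after applying $\T$, contradicting soundness. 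Your general pumping strategy (product of the ABox with the RPE automata) could in principle be made to work, but only after first replacing $s^+$ by a query of this there-and-back shape; as it stands, $s^+$ never forces the anonymous part to be entered more than once per match, and only as the final step.
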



\noindent
This negative result motivates the need for a restricted query language that admits a rewriting into UC2RPQs under $\ontoLang$ TBoxes. 
We use a similar language to \cite{Dragovic2022}, called \emph{Navigational Conjunctive Queries} (\restrictedQuery{s}). 

\begin{definition}[Navigational Conjunctive Query]
		\label{def:NCQs}
		A \emph{Navigational Conjunctive Query} (\restrictedQuery) is a C2RPQ, with atoms restricted to the following forms:   
        {\small
        \[   \{T\}(x) \  \{T\}(x,y) \ \node{A_1} \union\cdots \union \node{A_n}(x,x)  \
            \big (\pi_1 \union\cdots \union \pi_n \big )(x,y) \            \big (\pi_1 \union\cdots \union \pi_n \big)^*(x,y) \]   
		}    
\noindent
 with $T$ a data test, $A_i \in \conceptnames$ 
 and  $\pi_i$ a \emph{restricted RPE} $\pi_i := r \mid r^- \mid r^* \mid (r^-)^*$.  
 	\end{definition}
While in \restrictedQuery{s} we cannot use concatenation, 
it can still be simulated outside the scope of Kleene stars as usual.

\newcommand{\restrictexp}[2]{#1|^\T{#2}}
\subsection{Rewriting Algorithm for \restrictedQuery{s}}

In this section we give the complete algorithm for rewriting \restrictedQuery{s} and a given $\ontoLang$ TBox into UC2RPQs. 
\bl{Let us recall that $\ontoLang$ does not make assertions about property values, meaning that data tests are only evaluated on individuals in the ABox. For the sake of simplicity, we can therefore omit atoms with data tests during rewriting, as they remain untouched in the query.}
For the purpose of rewriting \restrictedQuery{s}, we make use of the functions for rewriting single roles and concepts given in Section \ref{sec:rewritingAQs}. 
We thus introduce a new function, called \emph{clipping}, which makes use of axioms in the TBox to rewrite a given \restrictedQuery\ into a union of \restrictedQuery{s}, inspired by~\cite{Eiter2012}. Given an RPE $\pi$ and a role $r$, we construct a new expression $\restrictexp{\pi}{r}$ in the following way.
\begin{definition}\label{def:UnionOfStars}
    Let $\pi$ be an expression of the form $\pi_1 \union \cdots \union \pi_n$, and let $r$ be a role.
    Then the \emph{$r$-restriction of $\pi$}, written as $\restrictexp{\pi}{r}$, is the union of all those $\pi_j$ s.t. there exists some $s$ with $\transclosure{r}{s}$, where  $s$ is in $\pi_j$. 
\end{definition}
Informally, $\restrictexp{\pi}{r}$ matches only those paths of $\pi$ that contain the role $r$ or a super-role of $r$. For a set of atoms $\mathbb{A}$ and atom $\pi$ 
we use $\pi(x,y) \invin \mathbf{A}$ to mean  $\pi(x,y) \in \mathbf{A}$  or $\pi^-(y,x) \in \mathbf{A}$. Note that $\pi^-(x,y)$ is built from $\pi(y,x)$, where $r^- \in \pi^-$ iff $r \in \pi$. 
    
\begin{definition}[Clipping Function] \label{def:clipping}
Given a query $q(\vec{x})$, we select a set $Y$ of variables s.t. $Y \cap \vec{x} = \emptyset$, and 
 a CI $A\ISA \exists r.B \in \T$. 
Then, do the following:  
\begin{enumerate}[label={(D\arabic*)},leftmargin=2.5em]
    \item \label{def:clipping:variables} Pick any $y \in Y$, and replace each $y'\in Y$ by $y$ everywhere in $q(\vec{x})$.  
    \item \label{def:clipping:conditions} 
    Every atom  $\alpha$ where $y$ occurs needs to satisfy one or more of the following:
    \begin{enumerate}[label={(\Alph*)}]
    	\item\label{clipRule:A} $\alpha$ contains a star, and if it contains a variable different from $y$, it is an unbound variable. 
        \item\label{clipRule:B} 
        $\alpha$ contains a concept name $C$ with $ \T \vDash B \ISA C$,  or $ \T \vDash \exists r^{-}\change{.\top} \ISA C$, and if it contains a variable different from $y$, it is an unbound variable.
        \item\label{clipRule:C}  $\alpha$ is  of the form $\pi(x,y)$ or $\pi^-(y,x)$ where $x \neq y$ and $\pi$ contains some $s$ with $\transclosure{r}{s}$.
    \end{enumerate}
    \item\label{def:clipping:roleAtomSets} Let $\mathbf{C}_y = \{ \pi(x,y) \mid \pi(x,y) \text{ satisfies {(C)}}\} \cup \{ \pi^-(x,y) \mid \pi(y,x) \text{ satisfies {(C)}} \}$. Define $\mathbf{C}_y^*$ as the set of atoms $\alpha$ in $\mathbf{C}_y$ that contain a role $s$ with $\transclosure{r}{s}$ occurring in $\alpha$ in the scope of a star. Let $\mathbf{C}^1_y = \mathbf{C}_y \setminus \mathbf{C}^*_y$, and let $X$ be the set of all variables different from $y$ that occur in $\mathbf{C}_y^1$.
    \item\label{def:clipping:drop} Drop from $q(\vec{x})$ every atom satisfying {(A)} or {(B)}, and every atom $\alpha \invin \mathbf{C}^1_y$. 
    \item\label{def:clipping:replaceVar} Replace each $x \in X$ by $y$, everywhere in $q(\vec{x})$. 
    \item\label{def:clipping:replaceStarAtom} Replace each atom $\pi(x,y) \invin q$ such that $\pi(x,y) \in \mathbf{C}_y^*$ by $\restrictexp{\pi}{r}(x,y)$. 
    \item\label{def:clipping:addConcept} Add $A(y)$ to $q(\vec{x})$. 
  \end{enumerate}
\end{definition}

\noindent
The clipping function relies on a well-known property of $\mathcal{ELHI}$:  for each $\T$ and $\A$ there is a universal, tree-shaped model $\I^u$ that can be used for answering all C2RPQs. Intuitively, we consider each possible set $Y$ of variables that may be mapped to the same 'anonymous' object $d$ of maximal depth in $\I^u$. 
Each such $d$ is triggered by one existential axiom $A \ISA \exists r.B$, i.e., if $d$ was added to $\I^u$, we know that it has a parent $d_p$ that is $A$, and $d$ was introduced as an $r$-child  of $d_p$ to satisfy $A \ISA \exists r.B$. Using this, we can modify the query to require that we map a variable to the object $d_p$ that is $A$, and drop from the query all (parts of) atoms that are already guaranteed by the existence of $d$. 
We show in the correctness proofs that each application of the function results in a rewritten query with 
the same answers, but whose mappings have a strictly lower depth (as at least one variable now is mapped to $d_p$ instead of its child $d$).
By repeated application, we
obtain a query with all variables are mapped to ABox individuals.

\begin{example}
    Let us consider the TBox $\T=\{A \ISA \exists r.B\}$ and the query $q(x_1):=(t^*\union r^*)(x_1,x_2)\land s^*(x_2,x_3)\land B(x_3)\land r^-(x_2,x_4)\land C(x_4)\land t^*(x_4,x_5)$. In \cref{line:variableSet} of \Cref{algo:rewriteC2RPQ} we iterate over all subsets $Y$ of $\mathit{vars}(q)$. Suppose that $Y=\{x_2,x_3\}$, then after applying \ref{def:clipping:variables} we get the query $q(x_1):=(t^*\union r^*)(x_1,x_2)\land s^*(x_2,x_2)\land B(x_2)\land r^-(x_2,x_4)\land C(x_4)\land t^*(x_4,x_5)$. Each atom of the query fulfills one of the conditions in \cref{def:clipping:conditions}. In \ref{def:clipping:roleAtomSets} we get the sets $\mathbf{C}^*_{x_2}=\{(t^*\union r^*)(x_1,x_2)\}$ and $\mathbf{C}^1_{x_2}=\{r^-(x_2,x_4)\}$, note that $\mathbf{C}_{x_2}$ is the union of these two sets. Then, in the next step \ref{def:clipping:drop} we drop from $q$ every atom that satisfies \ref{clipRule:A}, \ref{clipRule:B}, and the set $\mathbf{C}^1_{x_2}$. After replacing the variables in $X=\{x_4\}$ by $x_2$, we obtain the query $q(x_1):=(t^*\union r^*)(x_1,x_2)\land C(x_2)$. In step \ref{def:clipping:replaceStarAtom} we replace the atom $(t^*\union r^*)(x_1,x_2)$ by $r^*(x_1,x_2)$. Observe that keeping $t^*$ in the query makes the rewriting become not sound, since we either query for a path with exclusive $r$ or $t$ labels.    Finally, in \ref{def:clipping:addConcept} we add $A(x_2)$ to the query and return $q(x_2):=r^*(x_1,x_2)\land C(x_2)\land A(x_2)$.
\end{example}
In order to reduce the number of redundant queries in the output, we extend \Cref{algo:rewriteC2RPQ} by a simple containment check that we call \emph{structural subsumption} (see function $\mathtt{add^\subseteq}$). Intuitively, a query $q$ is structurally subsumed by another query $q'$ w.r.t. some TBox $\T$ (if each atom of $q$ is more specific than an atom of $q'$); this guarantees that  
the set of answers of $q'$ always contains all the answers of $q$. 
This allows us to drop some queries and obtain a smaller rewriting.  

\begin{definition}
    \label{def:queryContainment}
    Let $\T$ be a TBox and $q,q'$ be NCQs, we say that $q$ is \emph{structurally subsumed} by $q'$ w.r.t. $\T$, or short $q\subseteq_\T q'$, if for each atom $\beta_1\union\dots\union\beta_j\union\dots\union\beta_m(x,y)\in q'$ there exists an atom $\alpha_1\union\dots\union\alpha_i\union\dots\union\alpha_n (x,y)\in q$, such that for each $\alpha_i$ there is a $\beta_j$ such that $\T\vDash\alpha_i \ISA \beta_j$ and $\alpha_i,\beta_j\in \conceptnames\cup\allroles$.
\end{definition}

\begin{example}
    Consider a TBox $\T=\{r\ISA s, A_1\ISA B_1, A_2\ISA B_2\}$ and the queries $q_1(x):=C(x),r(x,y)\land A_1\union A_2(y)$ and $q_2(x):=s(x,y)\land B_1\union B_2\union B_3(y)$. Then, by \Cref{def:queryContainment} $q_1$ is subsumed by $q_2$ w.r.t. $\T$. 
    Let's consider the ABox $\A_1=\{r(a,b), C(a),A_1(b)\}$. Observe that $a$ is an answer to $q_1(x)$ as well as $q_2(x)$ over $(\T,\A)$. However, this is not always the case for the opposite direction as we show by the ABox $\A_2=\{r(a,b),B_2(b)\}$. Here $a$ is indeed an answer to $q_2(x)$ over $(\T,\A_2)$, but not $q_1$. Hence, $q_1$ is subsumed by $q_2$
    w.r.t. $\T$, but not vice-versa. 
\end{example}

\begin{lemma}
    \label{lemma:queryContainment}
    Let $\T$ be a TBox, $\A$ be an ABox, and $q_1(\vec{x}),q_2(\vec{x})$ be two NCQs, such that $q_1\subseteq_\T q_2$. Then, $\vec{a}$ is a certain answer to $(\T,\A, q_2(\vec{x}))$ if $\vec{a}$ is a certain answer to $(\T,\A, q_1(\vec{x}))$. 
\end{lemma}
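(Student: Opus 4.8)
The plan is to show that $q_1 \subseteq_\T q_2$ implies that every model-based match for $q_1$ induces a match for $q_2$, from which the statement follows for certain answers. First I would fix an arbitrary model $\I$ of $(\T,\A)$ and a tuple $\vec a$ that is a certain answer to $q_1$; by \Cref{def:certainAnswer} there is a mapping $\mu \in \eval{\varphi_1}{\mathit{PG}(\I)}$ with $\mu(\vec x) = \vec a$, witnessing each atom of $q_1$ (possibly via walks for the starred/disjunctive atoms). The goal is to build a mapping $\nu$ witnessing every atom of $q_2$ with $\nu(\vec x) = \vec a$. Since $q_1 \subseteq_\T q_2$, the two queries share the same answer variables and, by \Cref{def:queryContainment}, for each atom $\beta_1 \union \cdots \union \beta_m(x,y) \in q_2$ there is an atom $\alpha_1 \union \cdots \union \alpha_n(x,y) \in q_1$ over the same variable pair $(x,y)$ such that each $\alpha_i$ entails some $\beta_j$ with $\alpha_i, \beta_j \in \conceptnames \cup \allroles$. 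So I would simply take $\nu = \mu$ restricted to $\mathit{vars}(q_2)$ (note $\mathit{vars}(q_2) \subseteq \mathit{vars}(q_1)$ is forced by the atom-to-atom correspondence being on matching variable pairs), and verify atom by atom.

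The core of the argument is the per-atom check: given that $\mu$ satisfies $\alpha_1 \union \cdots \union \alpha_n(x,y)$ in $\mathit{PG}(\I)$, I must show $\mu$ satisfies $\beta_1 \union \cdots \union \beta_m(x,y)$. Here I would split on the (restricted) NCQ atom shapes of \Cref{def:NCQs}. For a non-starred atom, a walk realizing $\alpha_1 \union \cdots \union \alpha_n$ in $\mathit{PG}(\I)$ picks, at each step, one disjunct $\alpha_i$ — either a concept test $\node{A}$ at a node in $A^\I$, or a (possibly inverse) role step along an edge of $\I$. Because $\T \models \alpha_i \ISA \beta_j$ for the corresponding $\beta_j$, and $\I \models \T$, that same node lies in $\beta_j^\I$ (concept case) or the same edge is a $\beta_j$-edge in $\I$ (role case, using that role entailments between role names are realized in any model of $\T$ via the RIs). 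Thus each step of the $\alpha$-walk is also a legal step of a $\beta$-walk over the same nodes, giving a $\mu$-walk realizing $\beta_1 \union \cdots \union \beta_m$. For a starred atom $(\alpha_1 \union \cdots \union \alpha_n)^*(x,y)$ matched against $(\beta_1 \union \cdots \union \beta_m)^*(x,y)$, the same replacement is applied step-by-step along the length-$k$ walk, and the empty walk ($x = y$) maps to the empty walk; and the pure-concept-disjunction atom $\node{A_1}\union\cdots\union\node{A_n}(x,x)$ is the degenerate single-step case. Data-test atoms $\{T\}(x)$, $\{T\}(x,y)$ carry over verbatim since $q_1 \subseteq_\T q_2$ forces any data-test atom of $q_2$ to appear (up to the $\invin$ symmetry) in $q_1$ with the identical test.

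Having established that $\mu$ (restricted appropriately) witnesses $q_2$ in $\mathit{PG}(\I)$ with value $\vec a$ on $\vec x$, we conclude $\vec a$ is an answer to $q_2$ over $\mathit{PG}(\I)$; as $\I$ was an arbitrary model of $(\T,\A)$, $\vec a$ is a certain answer to $q_2$ over $(\T,\A)$, which is the claim. The main obstacle I anticipate is being careful about the walk semantics for the two starred constructs and the $\invin$ convention: one must make sure that when $q_2$'s atom is written as $\pi^-(y,x)$ while $q_1$'s is $\pi'(x,y)$, reversing the walk and inverting each role label still gives a valid walk (this is exactly the observation "$\pi^-(x,y)$ is built from $\pi(y,x)$, where $r^- \in \pi^-$ iff $r \in \pi$" from the text), and that concept-test steps are direction-agnostic. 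None of this is deep, but it is where a sloppy argument could slip; everything else is a direct consequence of $\I \models \T$ together with $\T \models \alpha_i \ISA \beta_j$.
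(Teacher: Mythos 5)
Your proposal is correct and follows essentially the same route as the paper's proof: fix an arbitrary model $\I$ of $(\T,\A)$, take the mapping witnessing $q_1$, and transfer it atom-by-atom to $q_2$ using the correspondence from \Cref{def:queryContainment} together with $\T\models\alpha_i\ISA\beta_j$ and $\I\models\T$ (the paper merely dresses this up as a proof by contradiction). Your version is in fact more careful about the walk semantics for starred atoms, the $\invin$ convention, and data tests, which the paper's proof glosses over.
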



\noindent
\bl{\textbf{Algorithm to Rewrite NCQs.}} We now formally state the definition of the function \texttt{rewriteNCQ}.
\bl{As input we assume an \restrictedQuery\ $q(\vec{x})$ and an $\ontoLang$ TBox~$\T$. The function first iterates over all subsets of variables and axioms in $\T$ of the form (\ref{NF4}) and (\ref{NF8}), and applies the clipping function exhaustively, producing a union of NCQs $Q$.
\changeNew{
Next the function loops over the queries $q' \in Q$ and for each concept $A_i$ in $q'$, we compute its witnessing sets and between lines 9 to 12, we produce a conjunction using $\mathtt{rewrConcept}$, similar to \Cref{algo:rewriteAQ}. For roles $r$ occurring inside $q'$, between lines 13 and 14, we similarly produce a new query, replacing each occurrence of $r$ with the output of $\mathtt{rewriteRole}$. 
In order to reduce the number of redundant queries in $Q$ we check $q'$ against structural query subsumption over $Q$, and either remove all queries in $Q$ that are contained in $q'$ before adding it, or drop $q'$ if $q'$ itself is already structurally subsumed by some other element of $Q$. 
The result $Q$ of this rewriting is a union of C2RPQs.
}

}

\begin{theorem}
    \label{thm:correctnessNCQRewriting}
    Let $\T$ be an $\ontoLang$ TBox and let $q(\vec{x})$ be a \restrictedQuery. 
    For every ABox $\A$ and tuple $\vec{a}$ of individuals, we have that $\vec{a}$ is a certain answer to $q(\vec{x})$ over $(\T,\A)$ if and only if $\vec{a}$ is an answer to $\mathtt{rewriteNCQ}(q(\vec{x}),\T)$ over $\A$.
\end{theorem}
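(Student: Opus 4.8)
\textbf{Proof plan for Theorem~\ref{thm:correctnessNCQRewriting}.}
The plan is to prove soundness and completeness of \texttt{rewriteNCQ} by separating the two ingredients it combines: the clipping loop (which eliminates anonymous objects introduced by axioms of shape~(\ref{NF4})/(\ref{NF5})) and the atomic-query rewriting from \Cref{sec:rewritingAQs} (which handles concept-name and role propagation via the CDG). Throughout I would work with the canonical tree-shaped model $\I^u = \I_{\A,\T}$ of $(\T,\A)$ used in \Cref{sec:tboxreasoning}, exploiting the standard fact that certain answers to any C2RPQ over $(\T,\A)$ coincide with answers over $\I^u$. Define the \emph{depth} of a match $\mu$ of a query $q$ in $\I^u$ as the maximal depth (distance from the ABox part) of any element in $\mu(\mathit{vars}(q))$. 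The whole argument is an induction on this depth.

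\textbf{Step 1: clipping is a sound and complete single step.} First I would show the core lemma: for a fixed choice of variable set $Y$ and axiom $A\ISA\exists r.B$, if $\mu$ is a match of $q$ in $\I^u$ with $\mu(Y)$ all equal to one anonymous element $d$ of maximal depth, generated as the $r$-child of some $d_p\in A^{\I^u}$ to satisfy $A\ISA\exists r.B$, then the clipped query $q'$ produced by \Cref{def:clipping} has a match $\mu'$ in $\I^u$ with $\mu'$ agreeing with $\mu$ on $\vec x$ and $\mu'(y)=d_p$ (hence strictly smaller depth, since $d$ is gone and $d_p$ is its parent). The key observations: every atom mentioning $y$ must, because $d$ is a leaf of $\I^u$ whose only incoming edge is the $r$-edge from $d_p$ and whose labels/outgoing structure are exactly those forced by $A\ISA\exists r.B$ plus subsequent TBox closure, fall into one of the three cases (A), (B), (C) of clipping condition~\ref{def:clipping:conditions} --- a star atom can only loop at $d$ (case A, using that $d$ has no further ABox-reachable cycle except through its self-generated subtree which is again tree-shaped), a concept atom at $d$ must be a superconcept of $B$ or of $\exists r^-.\top$ (case B, by \Cref{lemma:relation} applied to the singleton subtree), and a path atom reaching $d$ from outside must traverse the $r$-edge (case C). Dropping the atoms in (D4) is then justified because they are \emph{entailed} by the presence of $d_p\in A^{\I^u}$; the $r$-restriction in (D6) is justified because the star atom entering $d$ can only have used $r$ (or a superrole) on its last step, so keeping the other disjuncts would be unsound, while keeping $\restrictexp{\pi}{r}$ is exactly right. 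Conversely, any match of $q'$ can be extended to a match of $q$ by re-introducing $d$ as the witness, which gives soundness of one clipping step. This is essentially the Clipper correctness argument of~\cite{Eiter2012} adapted to our restricted query shape, and it is where I expect the most delicate case analysis.

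\textbf{Step 2: the clipping loop terminates at ABox level.} Since each clipping application strictly decreases the depth of the (appropriately chosen) match and the depth is a nonnegative integer, exhaustive application of clipping over all $(Y, A\ISA\exists r.B)$ pairs yields a finite union $Q_0$ of NCQs such that: $\vec a$ is a certain answer to $q$ over $(\T,\A)$ iff some $q_0\in Q_0$ has a match in $\I^u$ with all variables mapped into the ABox part of $\I^u$, i.e. to genuine ABox individuals. Termination of the loop itself follows because clipping only ever adds atoms of the form $A(y)$ with $y$ an existing variable and $A$ from the finite TBox, identifies variables, and the bodies are bounded; a standard saturation/subsumption argument (as in \Cref{algo:rewriteC2RPQ}) shows the fixpoint is reached, and \Cref{lemma:queryContainment} guarantees the $\mathtt{add^\subseteq}$ pruning preserves the answer set.

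\textbf{Step 3: from ABox-only matches to the final UC2RPQ.} Once all variables map to ABox individuals, a match of $q_0$ in $\I^u$ differs from a match in $\A$ itself only in that concept atoms $A_i(y)$ and role atoms may hold in $\I^u$ by TBox propagation rather than explicitly. Here I invoke \Cref{lemma:relation} (and the path-generating NFA / $\translate{B}$ construction and \texttt{rewriteRole} from \Cref{sec:rewritingAQs}): replacing each concept atom $A_i(y)$ by the disjunction over proper witnessing sets of conjunctions $\bigwedge_j \translate{B_j}(y,y_j)$, and each role $r$ by $\mathtt{rewriteRole}(r,\T)$, produces queries whose matches over the \emph{plain} ABox $\A$ are exactly the matches of $q_0$ over $\I^u$ with all variables in the ABox part --- this is precisely the content of the \Cref{algo:rewriteAQ} correctness proposition, applied atom-by-atom, together with the observation (made in the text before \Cref{thm:correctnessNCQRewriting}) that $\ontoLang$ makes no assertions about data values so data-test atoms pass through untouched. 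Note one subtlety to check: the star atoms $(\pi_1\cup\cdots\cup\pi_n)^*$ of an NCQ, after role rewriting, remain well-formed RPEs and their semantics over $\A$ correctly capture the corresponding paths in $\I^u$ restricted to ABox individuals --- but since star atoms in $\I^u$ that stay within the ABox part only ever use explicit edges possibly relabelled by role inclusions, this reduces to the single-role case handled by $\mathtt{rewriteRole}$.

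\textbf{Combining.} Chaining Steps 1--3: $\vec a$ is a certain answer to $q$ over $(\T,\A)$ $\iff$ $\vec a$ is an answer over $\I^u$ $\iff$ (Step 2) some $q_0\in Q_0$ has an ABox-only match in $\I^u$ mapping $\vec x$ to $\vec a$ $\iff$ (Step 3) $\vec a$ is an answer to the concept/role-rewritten form of some $q_0$ over $\A$ $\iff$ $\vec a$ is an answer to $\mathtt{rewriteNCQ}(q(\vec x),\T)$ over $\A$, since the latter is by construction exactly the union (modulo the sound pruning of Step 2) of those rewritten queries. The main obstacle is Step~1: proving that in a tree-shaped $\ontoLang$ model every atom touching a maximal-depth anonymous object necessarily matches one of the clipping cases, and that the precise bookkeeping of (D1)--(D7) --- especially the $r$-restriction of star atoms and the interaction between $\mathbf{C}_y^*$ and $\mathbf{C}_y^1$ --- yields an equivalent query of strictly smaller depth; the restricted NCQ syntax (no concatenation inside stars, only single roles under stars) is exactly what makes this case analysis finite and tractable.
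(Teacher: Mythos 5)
Your proposal follows essentially the same route as the paper: a one-step soundness lemma (extend a match of the clipped query to a match of the original by re-introducing the anonymous $r$-child as witness), a one-step completeness lemma (a match hitting a maximal-depth anonymous leaf $d$ yields, after clipping with $Y=\mu^{-1}(d)$, a match of strictly smaller measure sending $y$ to the parent $d_p$), exhaustive application to push all variables into the ABox, and then \Cref{lemma:relation} together with the atomic-query machinery of \Cref{sec:rewritingAQs} for the final concept/role rewriting; your case analysis of conditions (A)--(C) and of the $\mathbf{C}_y^*$ versus $\mathbf{C}_y^1$ split matches the paper's Lemmas on clipping soundness and completeness. One small technical slip: you define the progress measure as the \emph{maximum} depth of any variable image, but if two variables map to distinct anonymous elements at the same maximal depth, a single clipping step removes only one of them and the maximum does not decrease. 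The paper instead uses the \emph{sum} of the depths over all variables ($\mathrm{deg}(\mu)=\sum_{y}|\mu(y)|$), which does strictly decrease at every step; your argument goes through verbatim once the measure is replaced by this sum (or any equivalent multiset ordering).
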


\noindent In the proofs we show that one step of the $\mathtt{clipping}$ function is sound and complete. 
In \co{function \texttt{rewriteNCQ}}
we exhaustively apply the clipping function, which means that we have a query that can be evaluated over the plain ABox without the anonymous part. 
\co{The correctness for the second part of function \texttt{rewriteNCQ}, where we substitute unions with conjunctions, is given by \Cref{lemma:SoundnessPropagatingPaths} and \Cref{lemma:completenessPropagatingPaths} and the exhaustive replacement of all reachable conjunctions  and rewriting of roles and atoms.}
By \Cref{thm:correctnessNCQRewriting} NCQ answering for $\ontoLang$ reduces to C2RPQ query evaluation, which is in \NL in data complexity (see \cite{Barcelo2013}). This is worst-case optimal. 

\begin{theorem}
    \label{thm:terminationNCQ}
    Let $\T$ be an $\ontoLang$ TBox and $q$ a NCQ. Then, the algorithm $\mathtt{rewriteNCQ}(\T,q)$ terminates. 
\end{theorem}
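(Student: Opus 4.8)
The plan is to prove termination by exhibiting a well-founded measure on the state of the algorithm and showing that every loop decreases it. The algorithm \texttt{rewriteNCQ} has essentially two phases: an outer phase that applies the \texttt{clipping} function exhaustively over all choices of variable subsets $Y$ and existential axioms of shape (\ref{NF5})/(\ref{NF8}), and an inner phase that replaces concept names with RPEs (via \texttt{rewrConcept}) and roles with unions of subroles (via \texttt{rewriteRole}). I would handle the two phases separately, since the inner phase is plainly terminating: $\T$ is finite, so there are finitely many concept names and roles in any given query, \texttt{witnessSets} terminates (it is a monotone fixpoint computation over the finite powerset of $N_\T \cap \conceptnames$), and \texttt{rewrConcept}/\texttt{rewriteRole} each produce a single fixed RPE from the finite NFA $G_\T^{\mathbb{A}}(B)$ and the finite subrole relation. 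So the real content is termination of the exhaustive clipping phase.

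For the clipping phase, the key observation — already flagged in the discussion after \Cref{def:clipping} — is that each application of \texttt{clipping} strictly decreases the maximal depth at which query variables need to be mapped in the universal tree-shaped model $\I^u$, because step \ref{def:clipping:variables} merges the variables of $Y$ into one, step \ref{def:clipping:replaceVar} merges the variables of $X$ into $y$, steps \ref{def:clipping:drop} and \ref{def:clipping:replaceStarAtom} remove or shrink atoms, and step \ref{def:clipping:addConcept} only adds a concept atom $A(y)$. The clean way to formalize this for a termination argument (which must hold independently of any ABox) is to define a purely syntactic well-founded measure on NCQs: for instance, the pair $\mu(q) = (\,|\mathit{vars}(q)|,\ n_\star(q)\,)$ under lexicographic order, where $n_\star(q)$ is the number of atoms containing a Kleene star, or more robustly a multiset measure combining the number of variables with the total number of atoms. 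First I would check that no step of \texttt{clipping} can increase $|\mathit{vars}(q)|$ (step \ref{def:clipping:addConcept} introduces no fresh variable; the fresh variables of $Y$ are all identified, and in fact the net effect over the quantifier-free body is non-increasing), and then argue that whenever $|\mathit{vars}(q)|$ is unchanged the number of atoms strictly decreases, because a clipping step always drops at least the atom(s) witnessing the existential — concretely, when the triggering axiom $A \ISA \exists r.B$ is "used", at least one atom satisfying \ref{clipRule:A}, \ref{clipRule:B} or in $\mathbf{C}^1_y$ is removed, or some starred atom is strictly narrowed via $\restrictexp{\pi}{r}$, and $A(y)$ is the only addition. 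A subtle point to nail down: step \ref{def:clipping:replaceStarAtom} replaces an atom by another atom rather than deleting one, so I must make sure the measure accounts for "union of $n$ alternatives shrinking to $k<n$ alternatives" as a decrease — this is why a finer measure (e.g. also summing the sizes of the RPEs appearing in atoms, which are bounded by the size of $\T$ and $q$) is safer than just counting atoms.

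Having a strictly decreasing well-founded measure on each individual clipping step, the outer loop terminates because: the set of reachable queries is bounded (only finitely many NCQs have variables among a fixed finite set, atoms built from the finitely many concept names, roles and data tests of $q$ and $\T$, and RPE sizes bounded by the sizes of $q$ and $\T$ — noting that clipping never introduces genuinely new building blocks, only rearranges and restricts existing ones), and along any chain of clipping applications the measure strictly decreases, so no infinite chain exists; the structural-subsumption pruning in $\mathtt{add^\subseteq}$ can only remove queries and hence cannot harm termination. Combining this with termination of the inner phase (finitely many $q' \in Q$, each processed in finitely many steps), \texttt{rewriteNCQ} terminates.

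The main obstacle I expect is not the "depth decreases" intuition, which is already sketched in the paper, but pinning down a single syntactic measure that is simultaneously (i) provably non-increasing under the variable-merging steps \ref{def:clipping:variables} and \ref{def:clipping:replaceVar}, (ii) strictly decreasing once one also accounts for the atom-removal in \ref{def:clipping:drop} and the atom-narrowing in \ref{def:clipping:replaceStarAtom}, and (iii) bounded below so that the reachable-query set is finite — the interaction between removing atoms and shrinking the union inside a starred atom is the delicate case, and I would address it by taking the measure to be a multiset over atoms where each atom is weighted by the number of disjuncts in its RPE plus one, ordered by the multiset extension of the natural order, which makes every clipping operation a strict decrease.
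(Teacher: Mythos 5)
Your proof splits into two halves: (i) a strictly decreasing well-founded measure on each individual clipping step, and (ii) finiteness of the set of reachable queries together with the monotone growth of $Q$. Half (i) is the one you present as the key device, and it is wrong. A single application of $\mathtt{clipping}$ need not decrease the number of variables, the number of atoms, or your weighted multiset of RPE sizes: take $\T=\{A\ISA\exists r.B\}$ and $q(x):=r^*(x,y)$ with $Y=\{y\}$. The only atom containing $y$ satisfies condition (C), lies in $\mathbf{C}^*_y$, and its restriction $\restrictexp{r^*}{r}$ is $r^*$ itself, so step (D4) drops nothing, (D6) replaces the atom by itself, and (D7) adds $A(y)$. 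The output $r^*(x,y)\land A(y)$ has the same variables, one more atom, and a strictly \emph{larger} multiset under your weighting (the multiset $\{2\}$ becomes $\{2,2\}$, which is larger, not smaller, in the multiset extension). The paper's own proof explicitly concedes this growth: it only bounds the number of atoms in any generated query by \emph{twice} that of the input, precisely because $\mathbf{C}^*_y$-atoms may be replaced rather than dropped while $A(y)$ is added. So no measure of the kind you describe can strictly decrease at every step; the "depth decreases" intuition you cite concerns the degree of the \emph{mapping} into the universal model (used for completeness of clipping, Lemma~\ref{lemma:clipperonestep:completeness}), not a syntactic quantity of the query, and it cannot be transplanted into a termination argument that must be ABox-independent.

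Fortunately, half (ii) of your argument is both correct and sufficient, and it is exactly what the paper does: every generated query uses variables from $\mathit{vars}(q)$, has at most $2|q|$ atoms, and builds its atoms from a fixed finite stock (renamings and $r$-restrictions of the original atoms, plus atoms $A(y)$ for the finitely many axioms $A\ISA\exists r.B$), so only finitely many queries are reachable; since the clipping loop only ever adds queries to $Q$, the condition $Q'=Q$ must eventually hold. Your treatment of the second phase (finiteness of $\T$, of the CDG, of the witnessing sets, and of the NFA-derived RPEs) matches the paper. To repair the write-up, drop the decreasing-measure claim entirely and carry the termination of the clipping loop on the finite-universe-plus-monotonicity argument alone; one small further correction is that $\mathtt{add^\subseteq}$ plays no role in the clipping loop at all (it appears only in the second phase, which iterates over a fixed set), so no remark about pruning is needed there.
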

\section{Implementation and Experiments} 
\label{sec:impl_and_exp}

We implemented a proof-of-concept prototype that, 
given an $\ontoLang$ TBox (in OWL syntax), rewrites \restrictedQuery{s} into UC2RPQs and translates them into Cypher, a declarative query language for the Neo4j property graph database. The Cypher query is then evaluated over real-world data stored in Neo4j. The Java source code of the prototype is publicly available~\cite{owl2cypher}.
\smallskip
{\newline \noindent
\textbf{Setup.}}
We execute the experiments on a virtual cluster node running Rocky Linux 8.10 with an AMD EPYC 7513 32-Core CPU clocked at 2.60 GHz and 400 GB RAM; with Neo4j 5.18.1 running on the same machine.
\smallskip
{ \newline \noindent
\textbf{Ontology.}}
As TBox (OWL ontology) we use the Cognitive Task Ontology (COGITO)~\cite{COGITO}, which integrates concepts of the Cognitive Atlas \cite{Poldrack2011} with the Hierarchical Event Descriptors (HED) \cite{Robbins2021}. This ontology includes about \num[round-precision=0]{4700} concepts and \num[round-precision=0]{9200} axioms, all of them expressible in $\ontoLang$: \num[round-precision=0]{122} of the axioms contain conjunction (NF1) and existential quantifiers on the right (NF3). For example, the axiom $\mathsf{ReadingTask}\ISA (\exists \mathsf{has.Read}\sqcap \exists\mathsf{has.Lang\text{-}item})$  defines a reading task by referring to the HEDs $\mathsf{Read}$ and $\mathsf{Lang\text{-}item}$ (where the conjunction is just a shortcut for two axioms in normal form). For all axioms, COGITO also includes the converse axiom, e.g., $(\exists \mathsf{has.Read}\sqcap \exists\mathsf{has.Lang\text{-}item})\ISA \mathsf{ReadingTask}$. 
\smallskip
{  \newline \noindent
\textbf{Data.}} The prototype rewrites an \restrictedQuery~into a Cypher query assuming that concepts in the ontology correspond to node labels in the database, and roles correspond to relationships (i.e., edge labels).  
For the experiments we choose a dataset from the domain of cognitive neuroscience \cite{Ravenschlag2023a}. This dataset---stored in our Neo4j database---consists of \num[round-precision=0]{396741} nodes and \num[round-precision=0]{2870405} relationships. It contains meta-information about fMRI data from OpenNeuro~\cite{OpenNeuroMRI}.
\begin{table}[t]
\centering
    \caption{Properties of rewritten queries, rewriting and evaluation time in Neo4j.}
     \label{tab:experiments}
     \subfloat[Queries grouped by type  \label{tab:experiments:GroupByStructure}]{
        \begin{tabular}{c c | c c | c c | c}
        \hline
        \multicolumn{2}{c |}{\textbf{Group} } & \multicolumn{2}{c |}{\textbf{Rewritten Queries (Avg.)}} & \multicolumn{2}{c |}{\textbf{Runtime [s]}} & \textbf{\#Timeouts}\\ 
        {type} & {\#queries} & {\#answers} & {\#atoms} & {rewriting} & {evaluation} & {(600s)} \\
		\hline
		G1 & \num[round-precision=0]{114} & \tablenum{0,35} & \tablenum{27,13} & \tablenum{0,05335} & \tablenum{2,87753} & {0} \\
		G2 & \num[round-precision=0]{1041} & \tablenum{1,21} & \tablenum{5,00} & \tablenum{0,03565} & \tablenum{2,38956} & {6} \\
		G3 & \num[round-precision=0]{2060} & \tablenum{0,45} & \tablenum{52,46} & \tablenum{1,00672} & \tablenum{54,40565} & {27} \\
		G4 & \num[round-precision=0]{1041} & \tablenum{371,65} & \tablenum{2,79} & \tablenum{0,01520} & \tablenum{0,93819} & {0} \\ 
            G5 & \num[round-precision=0]{114} & \tablenum{1,04}& \tablenum{20,21} & \tablenum{0,02678} &  \tablenum{2,34997} & {0} \\
        \hline
        Total & \num[round-precision=0]{4370} & \tablenum{89,74} & \tablenum{27,70} & \tablenum{0,48617} & \tablenum{26,43589} & {33} \\
		\hline
	\end{tabular} 
    }
    
    \subfloat[Queries grouped by size (number of C2RPQs in union resulting from rewriting)\label{tab:experiments:GroupByQuerySize}]{

        \begin{tabular}{c c | c c | c c | c}
		\hline
    	\multicolumn{2}{c |}{\textbf{Group} } & \multicolumn{2}{c |}{\textbf{Rewritten Queries (Avg.)}} & \multicolumn{2}{c |}{\textbf{Runtime [s]}} & \textbf{\#Timeouts}\\ 
        {size} & {\#queries} & {\#answers} & {\#atoms} & {rewriting} & {evaluation} & {(600s)} \\
    		\hline
    		{1}-{10} & \num[round-precision=0]{3785} & \tablenum{101,20} & \tablenum{10,60} & \tablenum{0,16722} &  \tablenum{14,35577} & {5} \\
    		{11}-{20}  & \num[round-precision=0]{302} & \tablenum{12,23} & \tablenum{83,62} & \tablenum{1,45607} & \tablenum{80,90783} & {0} \\
    		{21}-{30} & \num[round-precision=0]{129} & \tablenum{1,63} & \tablenum{134,63} & \tablenum{2,73019} & \tablenum{127,83398} & {6} \\
    		{30}+ & \num[round-precision=0]{154} & \tablenum{21,11} & \tablenum{289,81} & \tablenum{5,30956} & \tablenum{153,25702} & {22} \\ 
            \hline
            Total & \num[round-precision=0]{4370} & \tablenum{89,74}& \tablenum{27,70} & \tablenum{0,48617} & \tablenum{26,43589} & {33} \\
		\hline
	\end{tabular} 
    }
\end{table}
\smallskip
{\newline \noindent
\textbf{Queries.} }One use case of COGITO is to query for fMRI data containing a specific set of HED concepts (e.g. \textsf{Lang-Item}, \textsf{Read}), even if the data has only annotations for cognitive task concepts (e.g. \textsf{ReadingTask}), or vice versa. \bl{
Our goal is to evaluate the effects of different input queries on our rewriting approach, which is not affected by the presence of data tests. Therefore, we do not include data tests in our queries. 
We} generated a total of \num[round-precision=0]{4370} queries \bl{without data tests}, which can be structurally divided into 5 groups (G1-G5 in \Cref{tab:experiments:GroupByStructure}). The following list shows an example query representative of each query group. 
{\small \begin{enumerate}[label=G\arabic*,leftmargin=2.5em]
    \item $\mathsf{q(x):=\node{Dataset}(x)\land has^*(x,y)\land\node{ReadingTask}(y)}$
    \item $\mathsf{q(x):=\node{Dataset}(x)\land has^*(x,y)\land\node{Lang\text{-}item}(y)}$
    \item $\mathsf{q(x):=\node{Dataset}(x)\land\mathsf{has^*(x,y_1)\land\node{Read}(y_1),}}\mathsf{has^*(x,y_2)\land\node{Lang\text{-}item}(y_2)}$
    \item $\mathsf{q(x):=has(x,y)\land\node{Read}(y)}$
    \item $\mathsf{q(x):=\node{ReadingTask}(x)}$
\end{enumerate}}
The queries in the groups G1-G3 request fMRI datasets, with either a specific cognitive task (G1), one specific HED tag (G2), or a combination of two HED tags (G3). 
Since the depth at which the task or event tags occur varies from an fMRI scan to another, the queries use the Kleene star to navigate to them. In the group {G4} and {G5} we query for individual HED tags and tasks.
\co{While there are early research prototypes that can parse and evaluate GQL queries \cite{GQLParser}, to the best of our knowledge, there are no publicly available, robust and scalable database systems, which support GQL at the time of our experimental evaluation. Having such systems}
would allow us to evaluate our queries under the walk-based semantics that coincides with the certain answer semantics, and Cypher only supports the so-called \emph{trail semantics} \change{\cite{Francis2018}. 
The walk-based semantics \cite{Angles2017} returns all nodes that match the RPE of a query, while the trail semantics does not visit the same edge twice.}
Through careful manual inspection, we generated queries for which both semantics coincide. 
Finding syntactic conditions for the two semantics to match is left for future work. \smallskip
{\newline \noindent
\textbf{Results.} In \Cref{tab:experiments:GroupByStructure} and \Cref{tab:experiments:GroupByQuerySize}, we report the results of the experiments grouped by the type of input query and the number of queries in the output union, respectively.}
In each table we provide the number of queries in that group, the average rewriting and evaluation time, as well as the average number of atoms in the rewritten query. Lastly, we state how often the evaluation timed out at \num[round-precision=0]{600}s. We averaged the times for rewriting and evaluation over 10 runs for each of the input queries. 
Constructing the \cdg, on which \Cref{algo:rewriteC2RPQ} depends, takes around two minutes.
In \Cref{tab:experiments:GroupByStructure} we can see that group G3, which has queries with a combination of two HED tags, takes rather long (on average more than \num[round-precision=0]{50} seconds). We attribute that to the number of atoms, which suggest that the output queries are larger compared to the queries in the other groups.
In \Cref{tab:experiments:GroupByQuerySize} we see that queries producing a smaller number of C2RPQs in the output union were evaluated faster (compare average evaluation time for the groups 1-10 to 30+). 
The runtime also increases with the number of atoms in the query, which in turn grows with the query size. 
The time it takes to rewrite the queries is on average below 6 seconds, even for the group with the largest queries. 
The evaluation time seems to be independent of the number of answers. 
Additionally, we ran experiments with a version of \Cref{algo:rewriteC2RPQ} that does not check for structural query subsumption in \cref{line:checkContainment}. However, for our use case this rewriting algorithm often produces a union with more than \num[round-precision=0]{2000} C2RPQs. As a consequence 
the evaluation often times out, so this is no longer practicable. 

\section{Conclusion and Future Work} \label{sec:conclusion}
We presented an algorithm for rewriting \restrictedQuery{s} into UC2RPQs over a lightweight ontology that extends DL-Lite with some of the expressive features of $\mathcal{ELH}$ while keeping the data complexity of reasoning in \NL.  
Our restricted input query language (\restrictedQuery{s}) is justified by the fact that we have proven the impossibility of using (U)C2RPQs as both the input and output language of query rewriting. Nested regular path queries seem a promising target language for rewriting ontology-mediated C2RPQs. 
\todo[author=\bf R3,color=green!20]{it should probably state, “Nested regular path
    queries seem a promising source language for rewriting”. \textbf{BL}: I think our sentence is correct, but maybe it is not clear what source and target query language means?}
One of our goals is 
to find
a practicable algorithm, and our 
prototype implementation, which rewrites the queries into Cypher, suggests that we may be on track. 
It shows promising results on 
a real-world dataset from cognitive neuroscience. 
For future work, we aim to support a richer ontology language, and to target 
the GQL standard to support full C2RPQs. 

\begin{credits}
\subsubsection{\discintname} The authors have no competing interests to declare that are relevant to the content of this article. 

\subsubsection{\ackname} This research was funded in whole or in part by the Austrian Science Fund (FWF) PIN8884924 and P30873. This work was also partially supported by the Wallenberg AI, Autonomous Systems and Software Program (WASP) funded by the Knut and Alice Wallenberg Foundation. This work was partially supported by the State of Salzburg under grant number 20102-F2101143-FPR (DNI) and the Austrian Federal Ministry of Education, Science and Research (BMBWF) under grant number 2920 (Austrian NeuroCloud). The authors acknowledge the computational resources and services provided by Salzburg Collaborative Computing (SCC), funded by the Federal Ministry of Education, Science and Research (BMBWF) and the State of Salzburg.


\end{credits}
%
%
%

\bibliographystyle{splncs04}
\bibliography{omq-pg-db}

\newpage 

\appendix

\section{Additional Preliminaries}
\label{app:Prelmin}

We provide here some further materials, that are not critical to be explicitly given in \Cref{sec:preliminaries} since they are standard notions from the cited works given in that section.

\begin{table}
    \caption{Semantics of $\mathcal{ELHI}$}
    \label{tab:elhi}
    \centering
    \begin{tabular}{l|l|l}
        \hline
        Name 				& Syntax 		& Semantics \\
        \hline
            top concept 	& $\top$		& $\varDelta^\mathcal{I}$ \\
        concept name 		& $A$ 			& $A^\mathcal{I}\subseteq \varDelta^\mathcal{I}$ \\
        negation 			& $\lnot A$ 	& $\varDelta^\mathcal{I}\setminus A^\mathcal{I} $ \\
        role name 			& $r$			& $r^\mathcal{I}\subseteq \varDelta^\mathcal{I}\times\varDelta^\mathcal{I}$ \\
        inverse role		& $r^-$			& $\{(b,a) \mid (a,b) \in r^\I \}$ \\
        exist. restriction 	& $\exists r.C$ & $\{a \mid b\in C^\mathcal{I}:(a,b)\in r^\mathcal{I}\}$\\
        conjunction         & $C\sqcap D$   & $C^\I\cap D^\I$ \\
        concept inclusion 	& $C \ISA D$	& $C^\mathcal{I} \subseteq D^\mathcal{I}$ \\
        \hline
    \end{tabular}
\end{table}

\begin{definition}[Semantics of C2RPQs]
	Consider a C2RPQ $q(\vec{x})$ and 
    a property graph $\A=(N,E,\mathsf{label, prop})$.
	We define the evaluation of $q(\vec{x})$ in terms of a function $\eval{\cdot }{\A}$, given in \Cref{tab:eval}, which assigns to every query a set of mappings from $\mathit{vars}(q)$ to nodes in $N$.   
	A tuple $\vec{a}\subseteq N$ is an \emph{answer} of $q(\vec{x})$ in $\A$ iff there exists a mapping $\mu\in\eval{\varphi}{\A}$ such that $\mu(\vec{x})=\vec{a}$. 
\end{definition}

\begin{table}
    \caption{Walk-based evaluation function for C2RPQs 
    (\cite{Perez2006}) extended with data tests.}
\label{tab:eval}
    \centering
    \begin{tabular}{l l}
        \multicolumn{2}{l}{$\eval{k\! \odot\! v\ (x)}{\A}=\{\mu \mid v \text{ and }\textsf{prop}(\mu(x),k) \text{ in } \mathbf{D}, \co{\textsf{prop}(\mu(x),k)  \odot  v} \text{ holds}\}$} \\
        \multicolumn{2}{l}{$\eval{k\! \odot\! v\ (x,y)}{\A}=\{\mu \mid v \text{ and }\textsf{prop}((\mu(x),\mu(y)),k) \text{ in } \mathbf{D}, \co{ \textsf{prop}((\mu(x),\mu(y)),k) \odot v }  \text{ holds}\}$} \\
        $\eval{T\land T'(x)}{\A}=\eval{T(x)}{\A}\cap\eval{T'(x)}{\A}$ & 
        $\eval{T\lor T'(x)}{\A}=\eval{T(x)}{\A}\cup\eval{T'(x)}{\A}$ \\
        $\eval{T\land T'(x,y)}{\A}=\eval{T(x,y)}{\A}\cap\eval{T'(x,y)}{\A}$ & 
        $\eval{T\lor T'(x,y)}{\A}=\eval{T(x,y)}{\A}\cup\eval{T'(x,y)}{\A}$ \\
        $\eval{\lnot T(x)}{\A}=\{\mu\mid \mu\not\in \eval{T(x)}{\A}\}$ & $\eval{\lnot T(x,y)}{\A}=\{\mu\mid \mu\not\in \eval{T(x,y)}{\A}\}$ \\
        \multicolumn{2}{l}{$\eval{\node{A}(x,y)}{\A}=\{\mu \mid A\in\mathsf{label}(\mu(x)), \mu(x)=\mu(y)\}$ } \\
        $\eval{r(x,y)}{\A}=\{\mu\mid r(\mu(x),\mu(y))\in E\}$ & $\eval{r^-(x,y)}{\A}=\{\mu\mid r(\mu(y),\mu(x))\in E\} \quad$ \\
        \multicolumn{2}{l}{$\eval{(\pi\pi')(x,z)}{\A}=\{\mu \cup \mu'\mid \mu\in\eval{\pi(x,y)}{\A}, \mu'\in \eval{\pi'(y,z)}{\A},\mu(y)=\mu'(y)\}$}\\ 
        \multicolumn{2}{l}{$\eval{(\pi\union\pi')(x,y)}{\A}=\eval{\pi(x,y)}{\A} \cup \eval{\pi'(x,y)}{\A}$} \\
        \multicolumn{2}{l}{$\eval{\pi^*(x,y)}{\A}=\{\mu\mid \mu(x)\in N, \mu(x)=\mu(y)\}\cup\eval{\pi(x,y)}{\A}\cup \eval{\pi\pi(x,y)}{\A}\cup\dots$}
    \end{tabular}
\end{table}

\newpage
\section{Algorithm for rewriting NCQs}\label{app:Algorithm2}

We present the rewriting for NCQs in \Cref{algo:rewriteC2RPQ}. 

As input we assume an \restrictedQuery~ $q(\vec{x})$ and an $\ontoLang$ TBox $\T$. 
We iterate over all variable subsets $Y$ and axioms in $\T$ of the form \ref{NF4} and \ref{NF8}, and apply the clipping function exhaustively.  
Then
we loop over the queries inside $q' \in Q$ and for each atom $A_i$ of a union $A_1\union\dots A_i\dots\union A_n(x)$, we compute its witnessing sets and produce a new query by producing a conjunction  using $\mathtt{rewrConcept}$ as shown. For roles $r$ occurring inside $q'$, we similarly produce a new query, replacing each occurrence of $r$ with the output of $\mathtt{rewriteRole}$. 
Note here that we use the function $\mathtt{add}^{\subseteq}$ to avoid adding any rewritten queries to our output which are structurally contained by another element in the output. 


\begin{algorithm}
\small
    \caption{Function $\mathtt{rewrite\restrictedQuery}$ 
	}
    \label{algo:rewriteC2RPQ}
    \SetKwInOut{Input}{Input}
    \SetKwInOut{Output}{Output}

    \SetKwFunction{Fclipper}{$\mathtt{clipping}$}
    \SetKwFunction{FrewriteRole}{$\mathtt{rewriteRole}$}
    \SetKwFunction{FrewriteConcept}{$\mathtt{rewrConcept}$}
    \SetKwFunction{Funion}{$\mathtt{add}^\subseteq$}
    \SetKwFunction{FrewriteNCQ}{$\mathtt{rewriteNCQ}$}
    \SetKwProg{Fn}{function}{:}{}
	
    \Input{\restrictedQuery \xspace $q = (\varphi,\vec{x})$, $\ontoLang$ TBox $\T$}
    \Output{UC2RPQs $Q$}
    
    \Fn{\FrewriteNCQ{$q,\T$}}{ \label{line:clippingLoopStart}
        $Q:=\{q \}, Q':=\emptyset$ \\
        \While{$Q'\neq Q$}{
              $Q':=Q$ \\
              \ForEach{$q'\in Q'$}{
                 \ForEach{$A\ISA \exists r.B\in\T$ and $Y\subseteq \mathit{vars}(q'), Y\cap\vec{x}=\emptyset$\label{line:variableSet}}{
                     $Q := Q \  \cup \  $\Fclipper{$q',A\ISA \exists r.B\in\T,Y$}
                 }        
             }
        } \label{line:clippingLoopFinish}

        \ForEach{$q'\in Q'$}{
            \ForEach{$A_1\union\dots A_i\dots\union A_k(x)$ occurring in $q'$}{
           \ForEach{$\{B_1,\dots,B_n\}\in$ \FWitness{$A_i,G_\T$}\label{line:loopWitnesses}}{
                    $q'_{\mathit{rw}}(x):=${\footnotesize {\scriptsize\FrewriteConcept{$B_1,G_\T$}}$(x,y_1) \land \dots \land $ {\scriptsize\FrewriteConcept{$B_n,G_\T$}}$(x,y_n)$ }\label{line:rewriteConcept}\\
                    $Q:=$\Funion{$Q,q'_{\mathit{rw}}$} \label{line:checkContainment} \\
                }
             }
            \ForEach{roles $r$ occurring in $q'$}{
                $Q:=Q\cup q'[r \backslash $\FrewriteRole{$r,\T$}$]$ \label{line:rewriteRole}
            }
         }
         \Return $Q$
    } 

    \Fn{\Funion{$Q,q$}}{
        \ForEach{$q'\in Q$}{
            \If{$q\subseteq q'$}{ \label{line:containsQuery}
                \Return $Q$
            }
            \If{$q'\subseteq q$}{
                $Q:= Q\backslash q'$ \label{line:dropQuery}
            }
        }  
        \Return $Q\cup q$
    } 
\end{algorithm}

\newpage

\section{Proofs \Cref{sec:tboxreasoning} Standard Reasoning in $\ontoLang$}
\label{app:ProofsSection StandardReasoning}




\begin{appendixLemma}{\ref{lemma:SoundnessPropagatingPaths}}
[Soundness Propagating Paths]
    Let $\T$ be an $\ontoLang$ TBox and let $G_\T$ be its \cdg. If there is a proper witnessing set $W_A=\{A_1,\dots,A_n\}$ for $A$ and a set of concepts $\{C_1, \ldots, C_n\}$ (of the form $\some s_1\dots \some s_k.B$ with $k\ge 0$) such that, for each $1\le i \le n$, there is a path in $G_\T$ from $A_i$ propagating $C_i$, then $\T \models C_1\AND\dots\AND C_n \ISA A$. 
\end{appendixLemma}

\begin{proof}
    Consider an $\ontoLang$ TBox $\T$ in normal form and the \cdg $G_\T$. 
    Assume that there is a proper witnessing set $\{A_1,\dots,A_n\}$ for $A$ and a set of concepts $\{C_1, \ldots, C_n\}$ (of the form $\some s_1\dots \some s_k.B$ with $k\ge 0$) such that, for each $1\le i \le n$, there is a path in $G_\T$ from $A_i$ propagating $C_i$. 
    We first show for each of these paths that $\T\models C_i\ISA A_i$ holds and consider the witness set and the conjunction in a second step. 
    
    By \Cref{def:CDG_nodes_and_edges} the path propagating $C_i$ has the role label $(r_1\ldots r_k)$ and $\transclosure{r_j}{s_j}$ for $1\le j\le k$.
    We show the claim by induction on $k$. 
    First, consider the base case $k=0$, where the path is a single node $B\in\conceptnames$. Since $\T\models B\ISA B$ is always valid the claim holds for this case.
    
    For the induction step $k+1$, consider the path $A_i\rho_1\dots\rho_l B_l \rho_{l+1} B$ in $G_\T$ and suppose that $A_i\rho_1\dots\rho_l B_l$ propagates $\exists s_1\dots\exists s_k.B_l$. 
    From the induction hypothesis it holds that $\T\models \exists s_1\dots\exists s_k.B_l\ISA A_i$. 
    Then, we distinguish two cases, either $\rho_{l+1}$ is (1) a $r_{k+1}$-propagating edge with $\transclosure{r_{k+1}}{s_{k+1}}$, or (2) an $\varepsilon$-propagating edge. 
    
    In case (1) the $s_{k+1}$-propagating edge from $B_l$ to $B$ was either added by \cref{def:conceptdependencygraph:item:NF3} or \cref{def:conceptdependencygraph:item:roleAndInverse}. For \cref{def:conceptdependencygraph:item:NF3} $\T\models \exists s_{k+1}.B\ISA B_l$ follows directly from the CI in $\T$. In case of \cref{def:conceptdependencygraph:item:roleAndInverse}, $B$ is the $\top$ concept and there is a propagating path with role label $s_{k+1}$ from $B_l$ to some node $D$ and $\exists r_{k+1}^-.\top\ISA D$ with $\transclosure{s_{k+1}}{r_{k+1}}$. By induction hypothesis we assume that for this propagating path $\T\models \exists r_{k+1}.D\ISA B_l$ holds, thus, $\T\models\exists r_{k+1}.\top\ISA B_l$. Now, for both items, from $\T\models \exists r_1\dots\exists r_k.B_l\ISA A_i$, $\T\models \exists r_{k+1}.B\ISA B_l$ and $\transclosure{r_i}{s_i}$ with $1\le i\le k+1$ it follows that $\T\models \exists r_1\dots\exists r_{k+1}.B\ISA A_i$.  
    
    In case (2) the $\varepsilon$-propagating edge from $B_l$ to $B$ must be added by either \cref{def:conceptdependencygraph:item:top}, \ref{def:conceptdependencygraph:item:NF1}, \ref{def:conceptdependencygraph:item:closeconjunction}, \ref{def:conceptdependencygraph:item:NF4}, \ref{def:conceptdependencygraph:item:NF7_inv}, \ref{def:conceptdependencygraph:item:NF7}, or \ref{def:conceptdependencygraph:item:close}.
    \begin{enumerate}[topsep=0pt]
        \item[\ref{def:conceptdependencygraph:item:top}] $B_l$ is the $\top$ concept and since $\T\models B\ISA \top$ holds for any concept, $\T\models \exists r_1\dots\exists r_k.B\ISA A_i$ follows. 
        \item[\ref{def:conceptdependencygraph:item:NF1}] $B$ is of the form $D_1\sqcap\cdots\sqcap D_m$ and the claim holds since the CI $D_1\sqcap\dots\sqcap D_m\ISA B_l$ is in $\T$. 
        \item[\ref{def:conceptdependencygraph:item:closeconjunction}] let's assume that there is a node $D_1\sqcap\cdots \sqcap D_m$ and there are $\varepsilon$-propagating edges from $D\in\{D_1,\dots,D_m\}$ to $B_l$ in $G_\T$. Then, we want to show that $\T\models B_l\ISA D_1\sqcap \dots\sqcap D_m$ holds. From the propagating edges we suppose that $\T\models B_l\ISA D$ for each $D\in\{D_1,\dots,D_m\}$ by induction hypothesis. 
        It follows that the claim holds, since $B_l\ISA D_1\sqcap\dots\sqcap D_m$ is just another notation for it.
    \end{enumerate}
    In case that the edge was added through \cref{def:conceptdependencygraph:item:NF4}, \cref{def:conceptdependencygraph:item:NF7_inv}, or \cref{def:conceptdependencygraph:item:NF7}, $B$ is of the form $W_{\exists r.B_l}$.
    Since $W_{\exists r.B_l}$ is not a concept in $\T$ we treat it as a placeholder that serves as an anonymous witness that is in the extension of $B_l$. 
    Put another way, it substitutes a conjunction of all concept names that have an outgoing $\varepsilon$-propagating edge to $W$. 
    \begin{enumerate}[topsep=0pt]
        \item[\ref{def:conceptdependencygraph:item:NF4}] $B$ is of the form $W_{\exists r.B_l}$ and there has to be some axiom of the form $C\ISA\exists r.B_l$ in $\T$ and $W_{\exists r.B_l}$ is simply the placeholder for $B_l$.
        \item[\ref{def:conceptdependencygraph:item:NF7_inv}] we know that $\T\models \exists r^-.\top\ISA B_l$ and again some axiom of the form $\T\models C\ISA\exists r.D$ from the $r$-existential edge. From these two axioms and the placeholder $W_{\exists r.D}$ we can safely replace $W_{\exists r.D}$ by $W_{\exists r.D}\sqcap D$.
        \item[\ref{def:conceptdependencygraph:item:NF7}] works analogous to the previous item, just for the inverse roles.
        \item[\ref{def:conceptdependencygraph:item:close}] we add $\varepsilon(B_l,B)$ to $G_\T$ if there is a node $W$ and roles $r,s\in\allroles$ with $\transclosure{s}{r}$ such that $\exists s(B,W)$ in $G_\T$ and there is a propagating path from $B_l$ to $W$ with role label $r$. 
        Observe that $W$ is a placeholder for a conjunction and we assume that $\T \models B\ISA \exists s.W$ holds. 
        Since there is a propagating path from $B_l$ to $W$ with role label $r$ we suppose by that $\T\models \exists r.W\ISA B_l$ holds. 
    \end{enumerate}
    To sum it up, since $\T\models B\ISA \exists s.W$, $\T\models\exists r.W\ISA B_l$ and $\transclosure{s}{r}$, we can infer that $\T\models B\ISA B_l$. 
    Thus, it holds that if there is a path from $A_i$ propagating $\exists s_1\dots\exists s_{k}.B$, then $\exists s_1\dots\exists s_{k}.B\ISA A_i$ holds. 

    It remains to show that $\T\models C_1\AND\dots\AND C_n\ISA A$ holds for a set $W_A=\{A_1,\dots,A_n\}$ that witnesses $A$ such that from each $A_i\in W_A$ there is a path that propagates $C_i$. Above we already showed that $\T\models C_i\ISA A_i$ holds. 
    In the following we show $\T\models A_1\sqcap\dots\sqcap A_n\ISA A$ by induction on the witnessing sets $\mathbf{W}_A=\{W_1,\dots,W_m\}$. 
    For the base case $m=0$, $A\in W_A$ (\Cref{def:witnessingSet} item (a)), for which the claim trivially holds, since $\T\models A\sqcap \dots\ISA A$. 
    For the induction step $m+1$, consider the set $W_{m+1}=\{A_1,\dots,A_n\}$ witnessing $A$ and by induction hypothesis we can assume that for the witnessing set $W_m=\{D_1,\dots,D_j\}$ of $A$, $\T\models D_1\sqcap\dots\sqcap D_j\ISA A$ holds. 
    Then, by \Cref{def:witnessingSet} it must be that for each $A_i\in W_{m+1}$ either (a) $A_i\in W_m$, or (b) there is a propagating path from $D\in W_m$ in $G_\T$ to a conjunction containing $A_i$. Note that by \Cref{def:normalform} this path can only contain $\varepsilon$ edges. 
    By putting these facts together it holds for the set $W_m$ that $\T\models A_1\AND \dots \AND A_n\ISA A$. 

    To sum it up, for the set $W_A=\{A_1,\dots,A_n\}$ it follows that $\T\models A_1\AND \dots \AND A_n\ISA A$ and for each $A_i\in W_A$ it holds that $\T\models C_i\ISA A_i$. Thus we can conclude that the claim $\T\models C_1\AND \dots \AND C_n\ISA A$ holds. 
\end{proof}
\newpage

\begin{definition}[Model Witnessing Propagating Paths]\label{def:modelConstruction}
    For a given ABox $\A$ and a TBox $\T$. Let $G_\T$ be the \cdg of $\T$. We define $\I_0$ as follows: 
    \begin{align*}
            \Delta^{\I_0} =& \{  a \mid A(a) \text{ or } r(a,b) \text{ or } r(b,a) \text{ is in } \A  \} \\
            & \cup \{ [\exists r . D ]   \text{ in $G_\T$ such that } \exists r (A, \exists r.D) \in E_\T \} \\
            A^{\I_0}=& \{ [W_{\exists r.D}] \mid \text{there is a path from } A \text{ in } G_\T \text{ propagating } W_{\exists r . D} \} \\ 
            & \cup \{ a \mid B_1(a),\dots, B_n(a) \in \A \text{ and there is } \\ 
            & \quad \ \text{a set } \{B_1,\dots,B_n\} \text{ witnessing } A \}\\
            r^{\I_0} = &\{  (a,b) \mid s(a,b) \in A \text{ for some } s \sqsubseteq^*_\T r \} \\ 
            & \cup \{  (a, [W_{\exists s. B} ]) \mid a \in A^\I \text{ for some } A\in N_\T, \\
            & \quad \ \ \text{some } \exists s (A, W_{\exists s.B}) \in G_\T \text{ and } s \sqsubseteq^*_\T r \} \\
            & \cup \{  ([W_{\exists s^-. \top}],a) \mid a \in A^\I \text{ for some } A\in N_\T, \\
            & \quad \ \ \text{some } \exists s^- (A, W_{\exists s^-.\top}) \in G_\T \text{ and } s \sqsubseteq^*_\T r \}
    \end{align*}
    Then, we obtain a model $\I$ by chasing $\I_0$ under the following two rules until we reach a fixpoint. 
    \begin{enumerate}
        \item\label{def:modelConstruction:conjunction} If $a \in (A_1\AND \cdots \AND A_n)^{\I_i}$ and $\{A_1,\dots,A_n\}$ witnesses $A$,
        then add $a$ to $A^{\I_{i+1}}$.
        \item\label{def:modelConstruction:propagatingpath} If $a \in (\exists r_1\dots\exists r_k . B)^{\I_i}$ and there is a path from $A$ that propagates \\
        $\exists r_1\dots\exists r_k . B$, then add $a$ to $A^{\I_{i+1}}$.
        \item\label{def:modelConstruction:existentialedge} If $a \in A^{\I_i}$ and there is $\exists r(A,[W_{\exists r . B}])\in G_\T$,  
        then add $(a, [W_{\exists r . B}]) \in s^ {\I_{i+1}} $ for all $ r \sqsubseteq^*_\T s$.
    \end{enumerate} 
\end{definition}

\begin{lemma}[Model Witnessing Propagating Paths]\label{lemma:validModel}
    Given an $\ontoLang$ TBox $\T$ in normal form and an arbitrary ABox $\A$. Then, the interpretation $\I$ constructed according to \Cref{def:modelConstruction} is a model of $\T$ and $\A$. 
\end{lemma}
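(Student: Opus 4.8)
The plan is to verify directly that the interpretation $\I$ of \Cref{def:modelConstruction} satisfies $\A$ and every axiom of $\T$. \emph{Well-definedness} comes first: each of the three chase rules only \emph{adds} pairs or elements to role and concept extensions, so $\I_0\subseteq\I_1\subseteq\cdots$ is monotone and $\I$ is their pointwise union; since $\Delta^{\I_0}$ is finite (the ABox individuals together with one witness element per existential edge of the finite graph $G_\T$), this union is in fact reached after finitely many steps, and any membership holding in $\I$ already holds in some $\I_i$. \emph{Satisfaction of $\A$} is then immediate from $\I_0$: every concept assertion $A(a)\in\A$ places $a$ in $A^{\I_0}$ via the singleton witnessing set $\{A\}$ and the second defining clause of $A^{\I_0}$, every edge $r(a,b)\in\A$ lies in $r^{\I_0}$ because $\transclosure{r}{r}$, property values are carried over unchanged, and nothing is ever deleted, so $\A$ stays a subgraph of $\mathit{PG}(\I)$.

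The substance of the proof is a case analysis over the six normal-form shapes of \Cref{def:normalform}; for each I would exhibit the CDG edge(s) it induces (by \Cref{def:conceptdependencygraph}) and the chase rule --- or clause of $\I_0$ --- that forces the axiom at the fixpoint.
\begin{compactitem}
\item \emph{Shape (\ref{NF1}), $A_1\sqcap\dots\sqcap A_n\ISA B$.} Item~\ref{def:conceptdependencygraph:item:NF1} puts $\varepsilon(B,A_1\sqcap\dots\sqcap A_n)$ into $G_\T$, so $\{A_1,\dots,A_n\}$ witnesses $B$; if $a\in(A_1\sqcap\dots\sqcap A_n)^\I$ then rule~\ref{def:modelConstruction:conjunction}, applied at the fixpoint, yields $a\in B^\I$.
\item \emph{Shape (\ref{NF3}), $\exists r.A\ISA B$, and shape (\ref{NF7}), $\exists r^-.\top\ISA B$.} Item~\ref{def:conceptdependencygraph:item:NF3} adds the propagating edge $r(B,A)$, resp.\ $r^-(B,\top)$; the one-edge path out of $B$ propagates $\exists r.A$, resp.\ $\exists r^-.\top$, so if $a$ has an $r$-successor lying in $A^\I$ (resp.\ an incoming $r$-edge) then rule~\ref{def:modelConstruction:propagatingpath} places $a$ in $B^\I$.
\item \emph{Shape (\ref{NF4}), $A\ISA\exists r.B$, and shape (\ref{NF8}), $A\ISA\exists r^-.\top$.} Item~\ref{def:conceptdependencygraph:item:NF4} creates the node $W_{\exists r.B}$ (resp.\ $W_{\exists r^-.\top}$) together with $\exists r(A,W_{\exists r.B})$ and $\varepsilon(B,W_{\exists r.B})$; then $[W_{\exists r.B}]\in\Delta^{\I_0}$, $[W_{\exists r.B}]\in B^{\I_0}$ by the first defining clause of $A^{\I_0}$, and whenever $a\in A^\I$ rule~\ref{def:modelConstruction:existentialedge} (resp.\ the third defining clause of $r^{\I_0}$, supplying the edge in the orientation $([W_{\exists r^-.\top}],a)$) provides the required $r$-edge.
\item \emph{Shape (\ref{NF5}), $r\ISA s$.} The super-role closure already built into every defining clause of $r^{\I_0}$ and into rule~\ref{def:modelConstruction:existentialedge} gives $r^\I\subseteq s^\I$ whenever $\transclosure{r}{s}$.
\end{compactitem}

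The main obstacle, and the place where the restrictions defining $\ontoLang$ are essential, will be the axioms that involve inverse roles. An anonymous witness element $[W_{\exists r.B}]$ is simultaneously the head of edges added by rule~\ref{def:modelConstruction:existentialedge} and the tail of edges added by the clauses of $r^{\I_0}$, so one has to rule out that a shape-(\ref{NF3}) axiom $\exists r.A\ISA B$ or a shape-(\ref{NF7}) axiom $\exists r^-.\top\ISA B$ fires at such an element --- or at an ABox individual directly below or above one --- without $B$ being already assigned there. Guaranteeing this is exactly the role of the closure items~\ref{def:conceptdependencygraph:item:roleAndInverse}, \ref{def:conceptdependencygraph:item:NF7_inv}, \ref{def:conceptdependencygraph:item:NF7} and \ref{def:conceptdependencygraph:item:close} of \Cref{def:conceptdependencygraph}, and the argument is a short induction on how $G_\T$ is generated, showing that every membership so forced is witnessed by a $\varepsilon^*$-path into the relevant $W$-node, hence is already recorded by the first clause of $A^{\I_0}$ or is derivable by rules~\ref{def:modelConstruction:conjunction}--\ref{def:modelConstruction:propagatingpath}. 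Locality of the consequent in shape (\ref{NF1}), the prohibition of $\exists r^-.A$ with $A\neq\top$, and the requirement that existential restrictions use role names (or $\top$) are precisely what keep this analysis finite and prevent propagation from ``turning a corner'' undetected. Once the six cases are discharged we have $\I\models\T$, and together with $\I\models\A$ this proves the lemma.
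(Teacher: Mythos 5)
Your proof follows essentially the same route as the paper's: show $\A\subseteq\mathit{PG}(\I)$ directly from the definition of $\I_0$, then verify each of the six normal-form axiom shapes by matching it to the CDG edge introduced by \Cref{def:conceptdependencygraph} and the chase rule (or clause of $\I_0$) that forces satisfaction at the fixpoint. Your added remarks on monotonicity/termination of the chase and on the inverse-role subtleties at the anonymous $W$-elements go slightly beyond what the paper spells out, but they elaborate rather than replace its argument.
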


\begin{proof}
    Consider an $\ontoLang$ TBox $\T$ in normal form, an ABox $\A$ and $\I$ the interpretation constructed according to \Cref{def:modelConstruction}. 
    From the definition for $\I_0$ in \Cref{def:modelConstruction} it holds that $\A$ is a subgraph of $PG(\I)$, thus $\I$ is a model of $\A$ according to \Cref{def:PropInterpret}.
    For $\I$ to be a model of $\T$, we show in the following that an arbitrary axiom in $\T$ is satisfied. 
    \begin{enumerate}[label=\textit{(NF\arabic*)}, leftmargin=3.2em]
        \item $C_1\sqcap\dots\sqcap C_n \ISA D$. Let $c\in C_i^\I$ for $1\le i\le n$, it follows by item \ref{def:conceptdependencygraph:item:NF1} that $\varepsilon(D,C_1\sqcap\dots\sqcap C_n)\in G_\T$ and the set $\{C_1,\dots,C_n\}$ witnessing $D$. From the model construction $\I$ in \Cref{def:modelConstruction} \cref{def:modelConstruction:conjunction} it holds that $c\in D^\I$.
        \item $\exists r.C\ISA D$. Let $c\in (\exists r.C)^\I$, from the construction of the \cdg there is $r(D,C)$ in $G_\T$ and by construction of $\I$ \Cref{def:modelConstruction} \cref{def:modelConstruction:propagatingpath} it follows that $c\in D^\I$.
        \item $C \ISA \exists r.D$. Let $c\in C^\I$, it follows from the construction of the \cdg that $\exists r(C,W_{\exists r.D})$ and $\varepsilon(D,W_{\exists r.D})$ in $G_\T$. From $\exists r(C,W_{\exists r.D})$ and $c\in C^\I$  it holds by \Cref{def:modelConstruction} \cref{def:modelConstruction:existentialedge} that $(c,[W_{\exists r.D}])$ in $r^\I$. Further, from $\varepsilon(D,W_{\exists r.D})$ and the construction of $\I_0$ it follows that $[W_{\exists r.D}]\in D^\I$ and we conclude that $c\in (\exists r.D)^\I$.
        \item $s\ISA r$. Let $(c,d)\in s^\I$, since we always consider the transitive closure, when adding edges to $\I$ in \Cref{def:modelConstruction}, it holds that $(c,d)\in r^\I$. 
        \item $\exists r^-.\top\ISA D$. Let $c\in (\exists r^-.\top)^\I$, from the construction of the \cdg \cref{def:conceptdependencygraph:item:NF3} it holds that there is a path from $D$ propagating $\exists r^-.\top$. Thus, by \cref{def:modelConstruction:propagatingpath} it follows that $c\in D^\I$ and the claim holds.
        \item $C\ISA \exists r^-.\top$. Let $c\in C^\I$, by the construction of the \cdg \cref{def:conceptdependencygraph:item:NF4} there is an $\exists r^-(C,W_{\exists r^-.\top})$ and $\varepsilon(\top,W_{\exists r^-.\top})$. By construction of $\I$ \cref{def:modelConstruction:existentialedge} $c\in (\exists r^-.\top)^\I$ and the claim holds.
    \end{enumerate}
\end{proof}


\begin{appendixLemma}{\ref{lemma:completenessPropagatingPaths}}[Completeness Propagating Paths]
    Let $\T$ be a $\ontoLang$ TBox and let $G_\T$ be its \cdg. 
    If $\T \models C_1\AND\dots\AND C_n \ISA A$ where $C_1,\dots, C_n$ are concepts of the form $\some s_1\dots \some s_k.B$ with $k\ge 0$ and $A,B\in\conceptnames$, then there is a set $W_A = \{A_1, \cdots, A_n\}$ that witnesses $A$ and such that for each $1 \leq i \leq n$ there is a path in $G_\T$ from $A_i$ that propagates $C_i$.
\end{appendixLemma}

\begin{proof} We show the claim by contraposition, for that we construct a model $\I$ according to \Cref{def:modelConstruction} with a given TBox $\T$. For the ABox consider $\T\models C_1\AND \cdots C_j\cdots \AND C_n\ISA A$, then we compose $\A=\A_1\cup\cdots \A_j\cdots\cup \A_n$ such that $\A_j=\{r_i(n_{i-1},n_i)\mid C_j=\exists r_1  \cdots \exists r_k.B_j \text{ and } 1\le i \le k\}\cup\{B_j(n_k)\mid C_j=\exists r_1  \cdots \exists r_k.B_j\}$.
From \Cref{lemma:validModel} it holds that $\I$ is a model of $\T$. 
Since we construct $\I$ in such a way that it makes true exactly the inclusions witnessed by the witness set with the propagating paths, it remains to show that $A(n_0)\not\in \I$ if there is no set $W_A=\{A_1,\dots,A_n\}$ that witnesses $A$ such that for each $A_j\in W_A$ there is a path from $A_j$ propagating $\exists r_1  \cdots \exists r_k.B_j$.
Since $n_0$ has to be in $(A_1\AND\dots\AND A_n)^{\I_i}$ to be added to $A^{\I_{i+1}}$ by \cref{def:modelConstruction:conjunction}, it is enough to show that $n_0\not\in A_j^\I$ if there is no path from $A_j$ propagating $\exists r_1  \cdots \exists r_k.B_j$.
In the following we show this claim by induction on $k$ and consider the ABox $\A_j$. 

Consider the base case $k=0$, then the ABox is of the form $\A_j=\{B_j(n_0)\}$ and we assume that there is no path from $A_j$ propagating $B_j$ in $G_\T$. From the fact that there is no such path from $A_j$ we can also infer that $\{B_j\}$ is not witnessing $A$. Thus, neither by construction of $\I_0$ nor by \cref{def:modelConstruction:conjunction} and \cref{def:modelConstruction:propagatingpath} we add $n_0$ to $B_j$.  

We continue with the induction step $k+1$, for that we consider the ABox $\A_j=\{r_i(n_{i-1},n_i)\mid 1\le i \le k+1\}\cup\{B_j(n_{k+1})\}$ and $B_j$ a concept name. Further, assume that there is no $\exists r_1\dots\exists r_{k+1}.B_j$ propagating path from $A_j$. We distinguish two cases, either from some node $D$ in $G_\T$ (1) there is an $\exists r_{k+1}.B_j$ propagating path, or (2) there is an $\exists r_1\dots\exists r_k.D$ propagating path but no $\exists r_{k+1}.B_j$ propagating path from $A_j$. For (1), it must hold that there is no $\exists r_1\dots\exists r_k.D$, since otherwise we would have an $\exists r_1\dots\exists r_{k+1}.B_j$ propagating path from $A_j$ against the assumption. By induction hypothesis it follows from the fact that there is no $\exists r_1\dots\exists r_k.D$ that $n_0\not\in A_j^\I$. 
For (2), based on the fact that there is no $\exists r_{k+1}.B_j$ propagating path from $D$ we want to show that $n_k\not\in D^\I$. 
From the fact that there is no $\exists r_{k+1}.B_j$ propagating path from $D$ we can infer that there is also no $\exists r_{k+1}.\top$ propagating path. Further, there must not be an $\exists r^-_k.\top$ propagating path from $D$, since this would mean we have an $\exists r_1\dots\exists r_k\exists r^-_k.\top$ and thus by \cref{def:conceptdependencygraph:item:roleAndInverse} an $\exists r_1\dots\exists r_k.\top$ propagating path. By definition of propagating paths it holds that there is an $\exists r_1\dots\exists r_k.B_j$ propagating path from $A_j$, which is against the assumption. 
To conclude, there is no propagating path in $G_\T$ that adds $n_k$ to $D^\I$ in the construction of $\I$ by \Cref{def:modelConstruction}. Therefore, even though there is an $\exists r_1\dots\exists r_k.D$ propagating path from $A$ we do not add $n_0$ to $A_j^\I$, since $n_k\not\in D^\I$.
\end{proof}

The following lemma considers a given ABox and almost immediately follows from \Cref{lemma:SoundnessPropagatingPaths} and \Cref{lemma:completenessPropagatingPaths}.

\begin{appendixLemma}{\ref{lemma:relation}}
    Let $\T$ be an $\ontoLang$ TBox \co{ and let $G_\T$ be its \cdg.} 
    For each ABox $\A$,  $A \in \conceptnames$ and individual $n_0$, we have 
    $\T,\A \models A(n_0)$ iff there is a proper witnessing set $W_A=\{A_1,\dots,A_n\}$ for $A$ such that for each $A_j\in W_A$ there exists 
    \textit{(i)} a (possibly empty) path $p_j$ \co{ in $G_\T$ }
    with role label $(r^j_1\cdots r^j_k)$ that propagates a concept $\some s_1 \dots\some s_k.B_j$; and 
    \textit{(ii)} a sequence of individuals $n^j_{1},\dots,n^j_{k}$ such that $B_j(n^j_{k}) \in \A$
    and  $r^j_i(n^j_{i},n^j_{(i+1)}) \in \A$
    for each $1 \leq i < k$.
\end{appendixLemma}

\section{Proofs Section Rewriting Navigational Queries}
\label{app:ProofsSectionRewriteNQs}

\begin{appendixTheorem}{\ref{thm:c2rpq_rewriting}}[Rewriting C2RPQs into UC2RPQs with $\exists r.\top$]
	There exist C2RPQs that cannot be rewritten into UC2RPQs w.r.t. TBoxes containing concepts of the form $\exists r.\top$ on both sides of CIs. 
    This holds already for C2RPQs with only one atom.
\end{appendixTheorem}

\begin{proof}
    Consider the C2RPQ $q(x,y)=(r p p^-)^+(x,y)$ (where $(r p p^-)^+$ stands for $(r p p^-)(r p p^-)^*$) and a 
    the TBox $\T=\{\exists s.\top \ISA \exists p.\top\}$. 
	A valid rewriting of $q$ w.r.t. $\T$ into a nested two-way regular path query would be $Q_r(x,y)=(r (p p^- \union \node{s}))^+(x,y)$, where we denote nested queries by "$\node{}$". In the following we show that it is not possible to express the same with the language of UC2RPQ, i.e.,
    there is no UC2RPQ $Q_r(x,y)$ such that, for every ABox $\A$, the answers to $Q_r$  over $\A$ coincide with the answers to $q$ over $\A$ and $\T$.
 
	For the sake of contradiction we assume that there is a UC2RPQ rewriting $Q_r(x,y)$ with $N$ the maximal number of conjuncts in a disjunct of $Q_r(x,y)$. 
	Then, consider the following ABox $\A$ (without the grayed-out nodes and edges)
	\begin{center}
		\includegraphics[width=.6\linewidth]{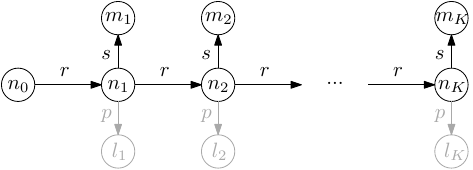}
	\end{center}
	where $K > N+1$. 
	Since we assume that there is a UC2RPQ $Q_r(x,y)$ and given that $(n_0,n_K)$ is a certain answer to $(\T,\A,q(x,y))$, there exists a disjunct $q_r(x,y)$ such that $(n_0,n_K)\in\eval{q_r(x,y)}{\A}$. 
	Then, we assume without loss of generality that the C2RPQ $q_r(x,y)$ is of the form $r_1(z_1,z_1')\land \dots \land r_N(z_N,z_N')$ for which there exists a mapping $\mu$ from $\{z_1,z_1',\dots,z_N,z_N'\}$ to $\{n_0,n_1,m_1,\dots,n_K,m_K\}$ such that $\mu(x)=n_0$, $\mu(y)=n_K$ and for $1\le i \le N$ there is a path $\pi_i$ between $\mu(z_i)$ and $\mu(z_i')$, such that the sequence of edge labels $L_i$ associated to $\pi_i$ satisfies the regular path expression $r_i$. 
	
	Consider the following steps to construct an ABox $\A_r$ from the mapping $\mu$ and $q_r$. 
    First, add all $\mu(z_i)$ and $\mu(z'_i)$ with $1\le i \le N$ as nodes to the ABox. 
    In a second step we extend the graph at $a_0=\mu(z_i)$ by fresh nodes and edges matching the regular path expression $r_i$. 
    For that consider the sequence of edges (with inverses) $L_i=e_0e_1\dots e_k$ from above, satisfying the regular path expression $r_i$. 
    Then, we add the nodes $a_1,\dots,a_k$ with $a_{k+1}=\mu(z'_i)$ and the following edges to the ABox, where $1\le j \le$ $k+1$
    \begin{itemize}
        \item if $e_j=r$ or $e_j=s$, then add $e_j(a_{j-1},a_j)$
        \item if $e_j=r^-$ or $e_j=s^-$, then add $e_j(a_j,a_{j-1})$.
    \end{itemize}
    By the construction it follows that $(n_0,n_K)\in\eval{q_r(x,y)}{\A_r}$ for $\mu(x)=n_0$, $\mu(y)=n_K$. 
    It remains to show that $(n_0,n_K)$ is not an answer to $q(x,y)$ over $(\T,\A_r)$, i.e. an answer over $\A_{\T,\A_r}$ the corresponding property graph to the canonical model of $(\T,\A_r)$.
    
    First assume that $n_0$ and $n_K$ are in the same connected component, because if $n_0$ and $n_K$ are not connected in $\A_{\T,\A_r}$, then obviously $(n_0,n_K)\notin\eval{q(x,y)}{\A_{\T,\A_r}}$. 
    Next, observe that $\A_r$ contains only edges from the construction step above that converted a sequence of edge labels into a \emph{semi-path} (a path disregarding the direction), that is \emph{linear}, i.e., the sum of out and in-degree for each node on the path is at most 2. This and the fact that $K>N+1$ with $N$ the number of conjuncts in $q_r$ implies that there are intermediate nodes that have at most 2 as sum of in- and out-degrees in $\A_r$. 
    Further, the sequences of edge labels $L_i$ consists of either $r$, $s$ or their inverses, but not label $p$. Thus, for the C2RPQ $q(x,y)=(r p p^-)^+(x,y)$ to have a mapping in $\A_{\T,\A_r}$, an edge with label $p$ must be added in the construction of the canonical model based on an edge with label $s$ in $\A_r$. But then every node on a path that has a mapping for $(r p p^-)^+(x,y)$ (besides start and end node) must have in-degree at least 1 (in-going edge with label $r$) and out-degree at least 2 (out-going edges with labels $r$ and $s$). Therefore, $(n_0,n_K)$ cannot be in $\eval{q(x,y)}{\A_{\T,\A_r}}$, which contradicts the initial assumption that there exists a UC2RPQ rewriting for $q(x,y)$ over the given TBox.
\end{proof}

\begin{appendixLemma}{\ref{lemma:queryContainment}}[Query Containment] 
    Let $\T$ be a TBox, $\A$ be an ABox, and $q_1(\vec{x})$, $q_2(\vec{x})$ be two NCQs, such that $q_1\subseteq_\T q_2$. Then, $\vec{a}$ is a certain answer to $(\T,\A,\break q_2(\vec{x}))$ if $\vec{a}$ is a certain answer to $(\T,\A, q_1(\vec{x}))$. 
\end{appendixLemma}

\begin{proof}
    Consider a TBox $\T$, an ABox $\A$ and two NCQs $q_1(\vec{x}),q_2(\vec{x})$, such that $q_1\subseteq_\T q_2$. Assume that $\vec{a}$ is a certain answer to $(\T,\A,q_1(\vec{x}))$. According to \Cref{def:certainAnswer}, this means that $\vec{a}$ is an answer to $q_1(\vec{x})$ in $PG(\I)$ for every model $\I$ of $\A$ and $\T$. 
    Now, for a proof by contradiction, suppose that there is no certain answer in $(\T,\A,q_2(\vec{x}))$. This means that there is at least one atom $\beta_1\union\dots\union\beta_j\union\dots\union\beta_m(x,y)$ in $q_2$ for which there is no mapping in $PG(\I)$ with $\I$ being any model of $\A$ and $\T$. 
    However, since 
    by \Cref{def:queryContainment} it holds that there exists an atom $\alpha_1\union\dots\union\alpha_i\union\dots\union\alpha_n(x,y)$ in $q_1$ such that for each $\alpha_i$ there is a $\beta_j$ and $\T\models\alpha_i\ISA\beta_j$. 
    By assumption there exists a mapping $\mu\in\eval{\alpha_1\union\dots\union\alpha_i\union\dots\union\alpha_n(x,y)}{PG(\I)}$ for every model $\I$ of $\A$ and $\T$. Since there is a $\beta_j$ for each $\alpha_i$ such that $\T\models\alpha_i\ISA\beta_j$ it holds that $\mu$ is also a valid mapping in $\eval{\beta_1\union\dots\union\beta_j\union\dots\union\beta_m(x,y)}{PG(\I)}$, which is a contradiction and we can conclude that the claim holds. 
\end{proof}

In the proofs for \Cref{lemma:clipperonestep:soundness} and  \Cref{lemma:clipperonestep:completeness} we make use of an universal model that we define in \Cref{def:canonicalmodel}.

\begin{definition}[Canonical Model]\label{def:canonicalmodel}
    Consider an $\ontoLang$ TBox $\T$ and an ABox $\A$. 
    Let $\I_0$ be the initial interpretation with $\A$ the corresponding property graph. 
    Then, we construct the \emph{canonical model} $\I_{\T,\A}$ by exhaustively applying the following rules to $\I_0$.
    \begin{enumerate}[label=(Ch\arabic*),leftmargin = 3em]
        \item If $A_1\sqcap\dots A_n \ISA B\in\T$, $v\in A_j^{\I_i}$ for $1\le j\le n$ and $v\notin B^{\I_i}$, then $\I_{i+1}$ is $\I_i$ with $B^{\I_{i+1}}=B^{\I_{i}}\cup\{v\}$. 
        \item If $\exists r.C \ISA B\in\T$, $v\in (\exists r.C)^{\I_i}$ and $v\notin B^{\I_i}$, then $\I_{i+1}$ is $\I_i$ with $B^{\I_{i+1}}=B^{\I_{i}}\cup\{v\}$.
        \item If $A \ISA \exists r.C\in\T$, $v\in A^{\I_i}$ and $v\notin (\exists r.C)^{\I_i}$, then $\I_{i+1}$ is $\I_i$ with a fresh $w$ in $\varDelta^{\I_{i+1}}$ and $C^{\I_{i+1}}=C^{\I_{i}}\cup\{w\}$. In case $r\in \rolenames$ then $r^{\I_{i+1}}=r^{\I_{i}} \cup\{(v,w)\} $, otherwise $r^{\I_{i+1}}=r^{\I_{i}} \cup \{(w,v)\}$.
        \item If $r\ISA s\in\T$, $(v,w)\in r^{\I_i}$ and $(v,w)\notin s^{\I_i}$, then $\I_{i+1}$ is $\I_i$ with $s^{\I_{i+1}}=s^{\I_{i}}\cup \{(v,w)\}$ in case $r\in \rolenames$, otherwise $s^{\I_{i+1}}=s^{\I_{i}}\cup \{(w,v)\}$.
    \end{enumerate}
\end{definition}

Like other lightweight DLs, in $\ontoLang$ we can rely on the existence of a \emph{universal model} 
$\I_{\T,\A}$ 
 that can be used for answering all queries, and 
 which can be constructed with a usual \emph{chase} procedure. 
The following lemma is standard. 

\begin{lemma}\label{lemma:answercanonical}
	For each $\ontoLang$ TBox $\T$, 
    ABox $\A$, 
    and C2RPQ $Q(\vec{x})$,  
 $\vec{a}\subseteq N$ is a certain answer to $(\T,\A,Q)$ iff $\vec{a}$ is an answer to $Q(\vec{x})$ in $PG(\I_{\T,\A})$. 
\end{lemma}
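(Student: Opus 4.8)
Since this is the standard universality statement for the chase, the plan is to prove the two directions by first checking that $\I_{\T,\A}$ is a model of $(\T,\A)$, and then showing that it maps homomorphically, fixing the ABox individuals $N$, into every model of $(\T,\A)$; answers of C2RPQs then transfer along such maps.

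For the ($\Rightarrow$) direction I would argue that $\I_{\T,\A}$ is itself a model of $\T$ and $\A$. Modelhood of $\A$ is immediate from \Cref{def:PropInterpret}: the initial interpretation $\I_0$ has $\A$ as its associated property graph, so $\A$ is a subgraph of $PG(\I_0)$, and rules (Ch1)--(Ch4) only ever add facts, so $\A$ remains a subgraph of $PG(\I_{\T,\A})$. Modelhood of $\T$ holds because $\I_{\T,\A}$ is the result of a \emph{fair} exhaustive application of (Ch1)--(Ch4) (taking the union of the $\I_i$ if the process does not terminate, as it may in $\ontoLang$): each rule corresponds to exactly one of the six axiom shapes of $\ontoLang$, and the non-applicability of a rule instance at the fixpoint is precisely the assertion that the triggering inclusion is satisfied in $\I_{\T,\A}$. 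With $\I_{\T,\A}\models(\T,\A)$ in hand, a certain answer $\vec a$ is by \Cref{def:certainAnswer} an answer over $PG(\I)$ for every model $\I$, in particular over $PG(\I_{\T,\A})$.

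For the ($\Leftarrow$) direction I would fix an arbitrary model $\I$ of $(\T,\A)$ and build, by induction along the chase sequence, a homomorphism $h\colon\I_{\T,\A}\to\I$ that is the identity on $N$ (which is legitimate since $N\subseteq\varDelta^\I$, because $\A$ is a subgraph of $PG(\I)$). In the base case $h$ is the identity on $\varDelta^{\I_0}=N$, a homomorphism into $\I$ precisely because $\A$ is a subgraph of $PG(\I)$. For the inductive step: rules (Ch1), (Ch2), (Ch4) introduce no new element, and since $\I$ satisfies the inclusion triggering the rule and $h$ is already a homomorphism, the membership or edge added to $\I_{i+1}$ is, after applying $h$, already present in $\I$; rule (Ch3), triggered by $A\ISA\some r.C$ at some $v\in A^{\I_i}$, is handled by picking $w'\in\varDelta^\I$ with $(h(v),w')\in r^\I$ and $w'\in C^\I$ --- which exists since $h(v)\in A^\I$ by the induction hypothesis and $\I\models A\ISA\some r.C$ --- and setting $h(w):=w'$. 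The limit of these partial maps is the desired $h$. Finally, if $\mu\in\eval{\varphi}{PG(\I_{\T,\A})}$ witnesses $\vec a$, i.e.\ $\mu(\vec x)=\vec a$, then $h\circ\mu\in\eval{\varphi}{PG(\I)}$ and $(h\circ\mu)(\vec x)=\vec a$ (using $\vec a\subseteq N$ and that $h$ fixes $N$), so $\vec a$ is an answer over every model of $(\T,\A)$, hence a certain answer.

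The step that requires actual care --- and the one I expect to be the main obstacle --- is showing that $h\circ\mu$ is still a witnessing assignment, i.e.\ that a possibly non-injective homomorphism preserves matches of regular path expressions and of data tests. For role atoms and concept tests this is a routine induction on the structure of the RPE $\pi$: homomorphisms preserve $r$-edges, $r^-$-edges and membership in concept names, so the $h$-image of a walk in $PG(\I_{\T,\A})$ matching $\pi$ is a walk in $PG(\I)$ matching $\pi$, and the cases $\pi\pi'$, $\pi\union\pi'$, $\pi^*$, $\pi^+$ reduce to this because under walk semantics a match is just an appropriate walk. Data-test atoms are preserved for a different reason: the chase never creates property values, so a data test that holds in $PG(\I_{\T,\A})$ can only hold at an element or a pair drawn from $N$, where $h$ is the identity and where $\mathsf{prop}$ agrees between $\A$ and $PG(\I)$. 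Assembling these observations completes the argument.
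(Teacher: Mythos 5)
Your proof is the standard chase-universality argument that the paper itself invokes without writing out (it introduces $\I_{\T,\A}$ in Definition~\ref{def:canonicalmodel} and then simply remarks that Lemma~\ref{lemma:answercanonical} ``is standard''), so in approach and structure you match the paper exactly: modelhood of the chase limit, a homomorphism into an arbitrary model that is the identity on $N$, and preservation of RPE-walks under homomorphisms. One caveat: your claim that a data test holding in $PG(\I_{\T,\A})$ can only hold at elements of $N$ is false for tests of the form $\lnot(k\odot v)$, which hold vacuously at anonymous chase elements (where $k$ is undefined) yet need not hold at their images in an arbitrary model; this is really a defect of the lemma as stated for C2RPQs with negated data tests rather than of your argument, and it is harmless for the positive tests and the NCQ fragment the paper actually uses, but it deserves an explicit restriction.
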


\begin{lemma}[Soundness Clipping One Step]\label{lemma:clipperonestep:soundness}
    Let $\T$ be an $\ontoLang$ TBox and $\A$ an ABox consistent with $\T$. For two \restrictedQuery{s} $q_1(\vec{x})$ and $q_2(\vec{x})$ with $q_2(\vec{x})=\mathtt{clipping}(q_1(\vec{x}), A\ISA\exists r.B,y)$ such that $A\ISA\exists r.B\in\T$ and $y\not\in\vec{x}$, it holds that $\vec{a}$ is an answer to $q_2(\vec{x})$ over $(\T,\A)$ implies $\vec{a}$ is an answer to $q_1(\vec{x})$ over $(\T,\A)$.
\end{lemma}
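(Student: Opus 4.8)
The plan is to argue over the canonical model. By \Cref{lemma:answercanonical}, $\vec a$ is an answer to $q_i$ over $(\T,\A)$ iff it is an answer to $q_i$ in $PG(\I_{\T,\A})$, so it suffices to turn a match $\mu_2$ of $q_2$ in $PG(\I_{\T,\A})$ with $\mu_2(\vec x)=\vec a$ into a match $\mu_1$ of $q_1$ there with $\mu_1(\vec x)=\vec a$. Recall from \Cref{def:clipping} that $q_2$ arises from $q_1$ by (i)~collapsing all variables of $Y$ into one variable $y$, and all variables of $X$ into $y$ as well (steps (D1), (D5)); (ii)~discarding the atoms flagged by conditions (A), (B) and those in $\mathbf{C}^1_y$ (step (D4)); (iii)~replacing every atom $\pi(x,y)\in\mathbf{C}^*_y$ by its $r$-restriction $\restrictexp{\pi}{r}(x,y)$ (step (D6)); and (iv)~adding the atom $A(y)$ (step (D7)).

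The crucial use of the added atom $A(y)$ is this: since $\mu_2$ is a match, $\mu_2(y)\in A^{\I_{\T,\A}}$, so by $A\ISA\exists r.B\in\T$ and chase rule (Ch3) of \Cref{def:canonicalmodel} there is an element $w$ with $(\mu_2(y),w)\in r^{\I_{\T,\A}}$ and $w\in B^{\I_{\T,\A}}$ (take a witness already present if there is one). Define $\mu_1$ on $\mathit{vars}(q_1)$ by $\mu_1(v)=w$ for $v\in Y$, $\mu_1(v)=\mu_2(y)$ for $v\in X$, and $\mu_1(v)=\mu_2(v)$ otherwise. This is well defined because $q_2$ identifies the variables of $Y$ (resp.\ of $X$), and $\mu_1$ coincides with $\mu_2$ on $\vec x$ because these renamed variables are disjoint from $\vec x$; in particular $\mu_1(\vec x)=\vec a$.

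It remains to verify that $\mu_1$ is a match of $q_1$ in $PG(\I_{\T,\A})$. Every atom of $q_1$ with no variable of $Y$ occurs in $q_2$, up to renaming $X$-variables to $y$, and is satisfied by $\mu_1$ because $\mu_1$ agrees with $\mu_2$ modulo the identification of $X$ with $\mu_2(y)$; data-test atoms fall here, since step (D2) forbids $y$ from occurring in them. For an atom $\alpha$ of $q_1$ containing a variable of $Y$, condition (D2) forces $\alpha$ to satisfy (A), (B) or (C):
\begin{itemize}
\item under (A), $\alpha$ is a starred atom whose endpoints are $y$ or unbound; mapping all of them to $w$ satisfies $\alpha$ by reflexivity of the Kleene star;
\item under (B), $\alpha$ is a concept test with a name $C$ such that $\T\models B\ISA C$ or $\T\models\exists r^-.\top\ISA C$; since $\mu_1$ sends its variable to $w$, and $w\in B^{\I_{\T,\A}}$ while $w$ also has the incoming $r$-edge from $\mu_2(y)$, we get $w\in C^{\I_{\T,\A}}$ in either case;
\item under (C) with $\alpha\in\mathbf{C}^1_y$, $\alpha$ was dropped and its non-$y$ variable put in $X$; $\mu_1$ sends that variable to $\mu_2(y)$ and $y$ to $w$, and since $\alpha$ contains, in the direction towards $y$, a disjunct that is a super-role $s$ of $r$ (i.e.\ $\transclosure{r}{s}$), the edge $(\mu_2(y),w)\in r^{\I_{\T,\A}}\subseteq s^{\I_{\T,\A}}$ witnesses $\alpha$;
\item under (C) with $\alpha\in\mathbf{C}^*_y$, $q_2$ contains $\restrictexp{\pi}{r}(x,y)$ and $\mu_2$ satisfies it; since $\restrictexp{\pi}{r}$ is a disjunctive sub-expression of $\pi$ (\Cref{def:UnionOfStars}), the witnessing walk uses a disjunct of $\pi$ going through a super-role $s$ of $r$, and appending to that walk the $r$-edge $(\mu_2(y),w)$ (an $s$-edge as well) produces a walk for $\pi$ between $\mu_1(x)=\mu_2(x)$ and $\mu_1(y)=w$, using that this disjunct has the form $s$, $s^-$, $s^*$ or $(s^-)^*$ and that the orientation built into (C) and into $\mathbf{C}_y$ lines up with the edge $\mu_2(y)\to w$.
\end{itemize}
Since $A(y)$ is the only atom of $q_2$ not needed for $q_1$, this shows $\mu_1$ is a match, which completes the argument.

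The main obstacle is the last case: one must keep track of exactly which disjunct of the ($r$-restricted) regular path expression witnessed $\mu_2$ and check that prepending or appending the single anonymous $r$-edge $\mu_2(y)\to w$ keeps the walk inside the original, un-restricted expression, which in turn hinges on the orientation of the super-role of $r$ inside the atom; the asymmetric definitions of $\mathbf{C}_y$, $\mathbf{C}^1_y$, $\mathbf{C}^*_y$ (together with the $\invin$ convention) are tailored exactly so that this direction agrees with $\mu_2(y)\to w$. The secondary point of care is the bookkeeping of the two rounds of variable identification — steps (D1) and (D5) — and their inverse in the definition of $\mu_1$, in particular ensuring no answer variable is affected.
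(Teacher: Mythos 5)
Your proposal is correct and follows essentially the same route as the paper's proof: both construct the match $\mu_1$ by sending $Y$ to the $r$-successor witness of $\mu_2(y)$ (guaranteed by the added atom $A(y)$ and $A\ISA\exists r.B$ in the canonical model), sending $X$ back to $\mu_2(y)$, and then verifying the atoms of $q_1$ by the same case analysis over conditions (A), (B), and (C) split into $\mathbf{C}^1_y$ and $\mathbf{C}^*_y$. Your treatment of the $\exists r^-.\top\ISA C$ subcase of (B) and of the orientation issues in $\mathbf{C}^*_y$ is, if anything, slightly more explicit than the paper's.
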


\begin{proof}
    Assume that we obtain $q_2(\vec{x})=\mathtt{clipping}(q_1(\vec{x}), A\ISA\exists r.B,Y)$ where $\mu_2$ is a mapping for $q_2$ in $\A_{\I_{\T,\A}}$, the corresponding property graph of the canonical model $\I_{\T,\A}$ and $\vec{a}=\mu_2(\vec{x})$, i.e., $\mu_2\in\eval{q_2}{\A_{\I_{\T,\A}}}$. Moreover, let $a=\mu_2(y)$. Then, since by step (C6) we add $A(y)$ to $q_2$ we know that $a\in(\exists r.B)^{\I_{\T,\A}}$. From this follows that there has to be a $b\in \varDelta^{\I_{\T,\A}}$ such that $(a,b)\in r^{\I_{\T,\A}}$ and $b\in B^{\I_{\T,\A}}$. We define the mapping $\mu_1$ for the variables of $q_1$ as given below:
    \begin{enumerate}[label=(\alph*)]
        \item \label{proof:clipping:soundness:mapping1} $\mu_1(x)=a$ for all variables $x\in X$
        \item \label{proof:clipping:soundness:mapping2} $\mu_1(y)=b$ for all variables $y\in Y$
        \item \label{proof:clipping:soundness:mapping3} $\mu_1(z)=\mu_2(z)$ for the remaining variables z.
    \end{enumerate}
    In the following we show that $\mu_1$ is a mapping for $q_1$ in $\A_{\I_{\T,\A}}$. For that consider an atom $\alpha$ in $q_1$, then there are two cases: 
    \begin{enumerate}[label=(\alph*)]
        \item $y$ occurs in $\alpha$, then there are three further cases: 
            \begin{enumerate}[label=(\arabic*)]
                \item $\alpha$ contains a star and if there is a variable $z\neq y$, it is unbound. It follows that $z\not\in X$ and also doesn't occur in $q_2$, observe that $\mu_1(z)=b$, and $\mu_1(y)=b$ by construction of $\mu_1$. It is a valid mapping since $\alpha$ contains a star and $y$ can trivially be mapped to any node. 
                \item $\alpha$ contains a concept name $C$ with $\T\vDash B\ISA C$ or $\T\vDash\exists r^-\ISA C$ and if has a variable $z\neq y$ it is unbound. Since {$z\not\in X$} and also doesn't occur in $q_2$, {by construction of $\mu_1$ we have} $\mu_1(z)=b$ and $\mu_1(y)=b$. From above we know that $b\in B^{\I_{\T,\A}}$, from which follows that $b\in C^{\I_{\T,\A}}$. Hence, $\mu_1$ is a valid mapping for the atom $\alpha$. 
                \item $\alpha$ is either of the form $\pi(x,y)$ where $\pi$ contains $s$ with $\transclosure{r}{s}$ or $\pi(y,x)$ with $s^-$, and $x\neq y$. In both cases $\alpha\in \mathbf{C}_y$, then $\alpha$ can either be in $\mathbf{C}_y^1$ or $\mathbf{C}_y^*$. 
                Let's first consider the case where $\alpha\in \mathbf{C}_y^1$, which means that $s$ does not occur in the scope of a star. Further, $x\in X$ and thus $\mu_1(x)=a$ and $\mu_1(y)=b$. From above we know that $(a,b)\in r^{\I_{\T,\A}}$ and since $\transclosure{r}{s}$, we conclude that $\mu_1$ is a valid mapping for $q_1(\vec{x})$ in both cases, i.e. $\pi$ contains $s$ or $s^-$. 
                Next, we show the claim for the case where $\alpha\in \mathbf{C}_y^*$, which means that a role $s$ or $s^-$ with $\transclosure{r}{s}$ occurs in the scope of a star, i.e., $s^*$ or $(s^-)^*$. In (C5) we replace $\pi(x,y)$ (or $\pi(y,x)$) by $\pi\mid^\T r(x,y)$ (resp. $\pi\mid^\T r(y,x)$) to obtain $q_2$. According to the definition for $\pi\mid^\T r(x,y)$, $s^*$ (resp. $(s^-)^*$) remains in $\pi\mid^\T r(x,y)$. Hence, it holds that there is a path from $\mu_2(x)$ to $\mu_2(y)$ matching $\pi\mid^\T r(x,y)$ and therefore also $\pi(x,y)$. 
                Moreover, by assumption we have $\mu_2(y)=a$ and $(a,b)\in r^{\I_{\T,\A}}$. 
                Then, since $\alpha\in\mathbf{C}_y^*$ by definition of set $X$ it follows that $x\not\in X$ and thus $x\neq y$, hence we get from the construction $\mu_1(x)=\mu_2(x), \mu_1(y)=b$. 
                {To sum it up, from the facts that $(a,b)\in r^{\I_{\T,\A}}$, $s$ or $(s^-)$ occurs in the scope of a star in $\pi(x,y)$ with $\transclosure{r}{s}$ and there is a path from $\mu_2(x)$ to $\mu_2(y)$ matching $\pi(x,y)$, we conclude that $\mu_1$ is a valid mapping for $\alpha$.}
            \end{enumerate}
        \item $y$ does not occur in $\alpha$, then we consider the following cases: 
            \begin{enumerate}[label=(\arabic*)]
                \item $\alpha$ has no variable from $X$ and the claim holds by item \ref{proof:clipping:soundness:mapping3} in the construction of mapping $\mu_1$
                \item $\alpha$ is an atom $\bigcup \pi(x,z)$ with $x\in X$ and $\pi$ is either $\node{C}, s$ or $s^*$, which was replaced by $\bigcup \pi(y,z)$ in (C4). By construction of $\mu_1$ we get $\mu_1(x)=a$ and $\mu_1(z)=\mu_1(z)$. Since we have $\bigcup \pi(y,z)$ in $q_2$ it holds that $\mu_1(x)=\mu_2(y)=a$ and $\mu_1(z)=\mu_2(z)$.
                \item The cases for $\bigcup \pi(z,x)$ or $(\bigcup \pi(z,x))^*$ with either $x\in X$ and $z\not\in X$ or ${x,y}\subseteq X$ are analogous to the previous two items. 
            \end{enumerate}
    \end{enumerate}
\end{proof}

\begin{definition}[Degree of mappings]\label{def:degree}
    Given a canonical model $\I_{\T,\A}$ with $\I_0$ being the initial model. 
    Let $|a|=0$ for $a\in\varDelta^{\I_0}$. For $p,w\in\varDelta^{\I_{\T,\A}}$, and $p$ the parent of $w$ in the forest of the anonymous part, let $|w|=|p|+1$. 
    We define the degree of a mapping $\mu$ as $deg(\mu)=\underset{y\in vars(\mu)}{\Sigma |\mu(y)|}$. 
\end{definition}

\begin{lemma}[Completeness Clipping One Step]\label{lemma:clipperonestep:completeness}
    Let $\T$ be an $\ontoLang$ TBox, $\A$ an ABox consistent with $\T$ and $q(\vec{x})$ a \restrictedQuery. If $q_1(\vec{x})$ has a mapping $\mu_1$ in $\A_{\I_{\T,\A}}$ such that $\vec{a}=\mu_1(\vec{x})$ and $deg(\mu_1)>0$, then there exists a $A\ISA \exists r.B\in\T$ and a set $Y$ where $Y\cap\vec{x}=\empty$ such that $q_2(\vec{x})=\mathtt{clipping}(q_1(\vec{x}),A\ISA\exists r.B,Y)$ has a mapping $\mu_2$ in $\A_{\I_{\T,\A}}$ where $\vec{a}=\mu_2(\vec{x})$ and $deg(\mu_2)<deg(\mu_1)$.
\end{lemma}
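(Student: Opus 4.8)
The plan is constructive: from the given mapping $\mu_1$ I will read off a CI $A\ISA\exists r.B\in\T$ and a variable set $Y$ that witness one clipping step, and then exhibit the lower‑degree mapping $\mu_2$ of $q_2=\mathtt{clipping}(q_1,A\ISA\exists r.B,Y)$. I work with the canonical model $\I_{\T,\A}$ of \Cref{def:canonicalmodel}, whose anonymous part is, as usual, a forest hanging off the ABox part, every node of which carries a well‑defined depth in the sense of \Cref{def:degree}. Since $\deg(\mu_1)>0$, some variable of $q_1$ is mapped into that forest; pick $d$ to be an image $d=\mu_1(v)$ of \emph{maximal} depth, say $|d|=m\ge 1$. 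Because $d$ is anonymous it was created by an application of chase rule (Ch3) to a CI of the form $A\ISA\exists r.B\in\T$ at $d$'s parent $d_p$; thus $d_p\in A^{\I_{\T,\A}}$, $d\in B^{\I_{\T,\A}}$ (the symmetric case $A\ISA\exists r^-.\top$, in which $d$ carries no concept name, is handled analogously), and, for $r\in\rolenames$, the unique edge between $d$ and $d_p$ lies in $s^{\I_{\T,\A}}$ for exactly the roles $s$ with $\transclosure{r}{s}$, while every other edge incident to $d$ runs to a strictly deeper (hence non‑image) child of $d$. Set $Y:=\{v\in\mathit{vars}(q_1)\mid\mu_1(v)=d\}$; it is non‑empty, and since answer variables are mapped to ABox individuals (which have depth $0$) we get $Y\cap\vec{x}=\emptyset$.

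First I would check that $\langle A\ISA\exists r.B,\,Y\rangle$ is a legal input to $\mathtt{clipping}$, i.e. that in the query obtained from $q_1$ by merging $Y$ to a single variable $y$ (step \ref{def:clipping:variables}), every atom in which $y$ occurs satisfies one of \ref{clipRule:A}, \ref{clipRule:B}, \ref{clipRule:C}. This is a case analysis over the shapes of \restrictedQuery{} atoms, driven by the structure of $\I_{\T,\A}$: a concept test $\langle A_1\rangle\union\cdots\union\langle A_k\rangle(y,y)$ is realised at $d$ by some $A_i$ with $d\in A_i^{\I_{\T,\A}}$, and such an $A_i$ is entailed from $B$ together with the incoming‑role information, so \ref{clipRule:B} applies; an atom both of whose endpoints map to $d$ has no other variable and cannot be a bare single edge (no self‑loop at the anonymous $d$), hence contains a Kleene star and satisfies \ref{clipRule:A}; and an atom linking $d$ to the image of a still‑present variable $z$ is dropped via \ref{clipRule:A} when $z$ is unbound, while otherwise any path realising it must traverse the $d$–$d_p$ edge, whose labels are precisely the super‑roles of $r$, which—after normalising the atom's direction with the $\invin$ convention—yields \ref{clipRule:C}. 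The same orientation reasoning shows that every variable $x$ kept in the set $X$ of step \ref{def:clipping:roleAtomSets} already has $\mu_1(x)=d_p$, since the $\mathbf{C}^1_y$‑atom it occurs in can only be realised by a single forward super‑role‑of‑$r$ edge into $d$.

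Then I would define $\mu_2$ by sending the merged variable $y$ (which, via step \ref{def:clipping:replaceVar}, absorbs all of $Y$ and all of $X$) to $d_p$ and letting $\mu_2$ agree with $\mu_1$ on every other variable, and verify that $\mu_2$ homomorphically maps $q_2$ into $\A_{\I_{\T,\A}}$: the atom $A(y)$ added in step \ref{def:clipping:addConcept} holds since $d_p\in A^{\I_{\T,\A}}$; the atoms deleted in step \ref{def:clipping:drop} are gone; each $\mathbf{C}^*_y$‑atom, after the $r$‑restriction of step \ref{def:clipping:replaceStarAtom}, was realised by $\mu_1$ via a path from $d$ whose first step is the $d$–$d_p$ edge with a super‑role‑of‑$r$ label, so deleting that step gives a path from $d_p$ realising a disjunct that survives the restriction; and atoms involving some $x\in X$ are fine because $\mu_1(x)=d_p=\mu_2(y)$ already. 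Since $\mu_2(\vec{x})=\mu_1(\vec{x})=\vec{a}$, and every variable of $Y$ has moved from depth $m$ to depth $m-1$ while merging only removes summands from the degree sum, $\deg(\mu_2)<\deg(\mu_1)$, which closes the argument.

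The main obstacle is the verification in the second paragraph: one must pin down precisely which concept names and which role‑labelled edges are attached to a freshly introduced anonymous element of the canonical model, and then treat the atom‑direction bookkeeping (the $\pi\leftrightarrow\pi^-$, i.e.\ $\invin$, convention) uniformly, so that, for instance, an atom of the form $(s^-)^*(x,y)$ is correctly recognised as containing the super‑role $s$ of $r$ in the scope of a star and hence both triggers \ref{clipRule:C} and survives the $r$‑restriction in step \ref{def:clipping:replaceStarAtom}. Everything downstream of that—the construction of $\mu_2$ and the degree count—is routine.
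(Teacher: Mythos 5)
Your proposal follows essentially the same route as the paper's proof: pick a deepest anonymous image $d$ with parent $d_p$ and the CI $A\ISA\exists r.B$ that created it, take $Y=\mu_1^{-1}(d)$, verify via the same case analysis that every atom touching $Y$ falls under \ref{clipRule:A}, \ref{clipRule:B}, or \ref{clipRule:C}, and define $\mu_2$ to agree with $\mu_1$ except for sending the merged variable to $d_p$, yielding the strict degree drop. The only (harmless) differences are cosmetic: you select $d$ by maximal depth where the paper phrases it as a leaf of the anonymous part, and you make explicit the observation that every $x\in X$ already satisfies $\mu_1(x)=d_p$, a detail the paper leaves implicit.
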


\begin{proof}
    By assumption $deg(\mu_1)>0$ from which follows that there exists a variable $y$ such that $\mu_1(y)\not\in N$ with $N$ the nodes of $\A$ and $y$ a leaf node (no successors) in the anonymous part. Let $d_y=\mu_1(y)$ and $d_p$ be the parent of $d_y$ that must be introduced by an application of a concept inclusion $A\ISA \exists r.B$ such that $d_p\in A^\I_{\T,\A}$. 
    Then, we choose a set $Y=\{x\in\mathsf{vars}(q_1)\mid \mu_1(x)=d_y\}$ to obtain the query $q_2=\mathtt{clipping}(q_1,A\ISA\exists r.B,Y)$, note since $d_y\not\in N$ it holds that $Y\cap\vec{x}=\emptyset$. 
    
    In the following we show that an atom $\alpha$ in $q_1$ where $y\in Y$ occurs must fall into an item of (C2). Observe that an atom $\alpha$ can have the form (i) $\bigcup C(y)$ (ii) $\bigcup \pi(x,y)$ or (iii) $(\bigcup \pi)^*(x,y)$ for a $\pi:= s\mid s^-\mid \pi^*$ where it either holds for a role $s$ that $\transclosure{r}{s}$ or $r \not\sqsubseteq_\T^* s$. 
    Let's first consider the cases (i) with $\bigcup C'(y)$. Since $\mu_1$ is a mapping for $q_1$, there has to be some concept $C$ in the union, such that we get $\mu_1(y)\in C^{\I_{\T,\A}}$ and by construction of the canonical model it holds that $\mu_1(y)\in B^{\I_{\T,\A}}$. Next, consider the case (ii) with an atom of the form $\bigcup s(x,y)$ such that $\transclosure{r}{s}$. Since $\mu_1$ is a mapping for $q_1$ in $\I_{\T,\A}$, $\mu_1(y)=d_y$ and $d_y$ must not have a self-loop it holds that $x\neq y$. Thus, item (C) applies. The same argument holds for atoms of the form $\bigcup s^-(y,x)$ with $\transclosure{r}{s}$. Given the fact that $d_y$ is a leaf in the anonymous part and $\mu_1(y)=d_y$ with $\mu_1$ a valid mapping for $q_1$ in $\I_{\T,\A}$, there cannot be atoms of the form $\bigcup s(y,x)$ or $\bigcup s^-(x,y)$ in $q_1$. We continue with atoms of the form $\bigcup s^*(x,y)$ or $\bigcup (s^-)^*(y,x)$ with $\transclosure{r}{s}$. If $x=y$, the atom falls into item (A), otherwise into (C). For the inverse directions, i.e., $\bigcup s^*(y,x)$ or $\bigcup (s^-)^*(x,y)$ with $\transclosure{r}{s}$, we have $\mu_1(x)=\mu_1(y)=d_y$ from which follows that $x\in Y$ and thus $x=y$. A mapping of the form $\mu_1(x)\neq \mu_1(y)$ is not valid since $d_y$ is a leaf node and $(d_y,d_p)\not\in s^{\I_{\T,\A}}$. Further, since $\mu_1(y)=d_y$ and $\mu_1$ a valid mapping for $q_1$ over $\I_{\T,\A}$ there can't be atoms containing only roles $s$ without a star and with $r\not\ISA_\T^* s$. In case there exists a role $s$ with a star, but $r \not\ISA_\T^* s$ the only valid mapping must contain $\mu_1(x),\mu_1(y)=d_y$, then $x\in Y$ and the atom falls into (A). The same argumentation holds for case (iii) with $(\bigcup \pi)^*(x,y)$ where the union is under the scope of a star. 
    
    Finally, we construct a mapping $\mu_2$ for $q_2$ in $\A_{\I_{\T,\A}}$ such that $\vec{a}=\mu_2(\vec{x})$ and $deg(\mu_2)\le deg(\mu_1)$ by setting (a) $\mu_2(z)=\mu_1(z)$ for all variables $z$ in $q_2$ with $z\not\in Y$, and (b) $\mu_2(y)=d_p$. It remains to show that $\mu_2$ is a valid mapping for $q_2$ in $\A_{\I_{\T,\A}}$. Observe that $vars(q_2)\subseteq vars(q_1)$ and intuitively $q_2$ is a subquery of $q_1$. Further, by (i) $|\mu_2(z)|=|\mu_1(z)|$ for all $z$ in $q_2$ with $z\not\in Y$ and (ii) $|\mu_2(y)|=|\mu_1(y)|-1$, we conclude that $deg(\mu_2)<deg(\mu_1)$. 
\end{proof}

Using \Cref{lemma:clipperonestep:soundness} and \Cref{lemma:clipperonestep:completeness} that are about one step of the clipping function, we obtain:

\begin{appendixTheorem}{\ref{thm:correctnessNCQRewriting}}[Soundness and Completeness]
	Let $\T$ be an $\ontoLang$ TBox and let $q(\vec{x})$ be a \restrictedQuery. 
    For every ABox $\A$ and tuple $\vec{a}$ of individuals, we have that $\vec{a}$ is a certain answer to $q(\vec{x})$ over $(\T,\A)$ if and only if $\vec{a}$ is an answer to $\mathtt{rewriteNCQ}(q(\vec{x}),\T)$ over $\A$.
\end{appendixTheorem}

\noindent In \Cref{lemma:clipperonestep:soundness} and \Cref{lemma:clipperonestep:completeness} we show that one step of the $\mathtt{clipping}$ function is sound and complete. In \Cref{algo:rewriteC2RPQ} we exhaustively apply this function (\cref{line:clippingLoopStart}-\cref{line:clippingLoopFinish}), which means that we have a query that can be evaluated over the plain ABox without the anonymous part. 
The correctness for the second part of \Cref{algo:rewriteC2RPQ} is given by \Cref{lemma:relation} and exhaustive replacement of all witnessing sets in \cref{line:loopWitnesses}, and application of $\mathtt{rewriteConcept}$ in \cref{line:rewriteConcept} and $\mathtt{rewriteRole}$ in \cref{line:rewriteRole}. \\

\Cref{thm:correctnessNCQRewriting} shows that NCQ answering for $\ontoLang$ can be reduced to C2RPQ query evaluation, which is in \NL in data complexity (see \cite{Barcelo2013}). This is worst-case optimal. 

\begin{appendixTheorem}{\ref{thm:terminationNCQ}}[Termination]
    Let $\T$ be an $\ontoLang$ TBox and $q$ a NCQ. Then, the algorithm $\mathtt{rewriteNCQ}(\T,q)$ terminates. 
\end{appendixTheorem}

\begin{proof} 
    We show termination of $\mathtt{rewriteNCQ}$ for any given $\T$ and $q$ for each of the loops in \Cref{algo:rewriteC2RPQ} separately. 
    Termination of the loop from \cref{line:clippingLoopStart} to \cref{line:clippingLoopFinish} follows from the following facts: 
    \begin{enumerate}
        \item\label{thm:terminationNCQ:item:maxNumAtoms} The maximum number of atoms in the generated query is equal to two times the number of atoms of the initial query. This is due to the fact that atoms satisfying condition (C) of the $\mathtt{clipping}$ function such that $\pi(x,y)\in\textbf{C}_y^*$ might be replaced and not dropped in (C5). However, for each of these atoms, we add at most one additional atom of the form $A(y)$, as this atom fulfils condition (B) in the next iteration and gets dropped in (C3).
        \item\label{thm:terminationNCQ:item:varSetSize} The set of variables that occur in a generated query is a subset of the set of variables $Y$ that occur in $q$. Indeed, in (C1) of the clipping function we replace a subset of $Y$ by a variable $y'\in Y$. 
        \item\label{thm:terminationNCQ:item:numDiffAtoms} From \cref{thm:terminationNCQ:item:varSetSize} it follows that the number of different atoms that may occur in a generated query $q$ is less than or equal to $m*2^n$, where $m$ is the number of axioms in form of $A\ISA \exists r.B\in\T$.  
        \item\label{thm:terminationNCQ:item:dropQueries} We do not drop generated queries.
    \end{enumerate}
    From \cref{thm:terminationNCQ:item:maxNumAtoms} and \cref{thm:terminationNCQ:item:numDiffAtoms} it holds that the number of different queries generated by the algorithm is finite. Further, from \cref{thm:terminationNCQ:item:dropQueries} follows that we do not generate a query more than once, thus it holds that this while loop terminates. 
    
    The termination of the remaining algorithm follows from the fact that the TBox is finite, thus: 
    \begin{enumerate}
        \item $\mathtt{rewriteRole}$ terminates
        \item the \cdg is finite
        \item the number of proper witnessing sets is finite and the function $\mathtt{witnessSets}$ terminates 
        \item the path-generating NFA in \Cref{def:pathGeneratingNFA} produces a finite regular path expression as output of the function $\mathtt{rpe}$ and $\mathtt{rewriteConcept}$ terminates
    \end{enumerate}
\end{proof}

\end{document}